\newcommand{\n}{\operatorname{n}}
\newcommand{\epn}{\operatorname{epn}}
\begin{document}

\setcounter{page}{201}
\publyear{22}
\papernumber{2108}
\volume{185}
\issue{3}

  \finalVersionForARXIV
 %% \finalVersionForIOS

\title{Perfect Domination, Roman Domination and perfect Roman Domination in Lexicographic Product Graphs}

\author{A. Cabrera Mart\'inez\thanks{Address for correspondence:  Universitat Rovira i Virgili,
                      Departament d'Enginyeria Inform\`atica i Matem\`atiques,  Av. Pa\"{\i}sos Catalans 26,
                      43007 Tarragona, Spain. \newline \newline
          \vspace*{-6mm}{\scriptsize{Received January 2021; \ accepted March 2022.}}}, C. Garc\'{i}a-G\'{o}mez, J. A. Rodr\'{\i}guez-Vel\'{a}zquez
\\
 Universitat Rovira i Virgili \\
 Departament d'Enginyeria Inform\`atica i Matem\`atiques  \\
 Av. Pa\"{\i}sos Catalans 26, 43007 Tarragona, Spain \\
  \{abel.cabrera, carlos.garciag, juanalberto.rodriguez\}@urv.cat
  }

\maketitle

\runninghead{A. Cabrera Mart\'inez et al.}{Perfect Domination, Roman Domination and perfect Roman Domination...}

\vspace*{-4mm}
\begin{abstract}
The aim of this paper is to  obtain closed formulas for the perfect domination number,  the Roman domination number and the perfect Roman domination number of lexicographic product graphs.
We show that these formulas can be obtained relatively easily for the case of the first two parameters.  The picture is quite different when it concerns the perfect Roman domination number. In this case, we
obtain  general bounds and then we give sufficient and/or necessary conditions for the bounds to be achieved.
 We also discuss the case of perfect Roman graphs and we characterize the lexicographic product graphs where the perfect Roman domination number equals the Roman domination number.
\end{abstract}

\begin{keywords}
Roman domination; perfect domination; perfect Roman domination; lexicographic product
\end{keywords}

\section{Introduction}
Given a graph $G$, a set $S\subseteq V(G)$ of vertices is a \emph{dominating set}  if
every vertex  in $V(G)\setminus S$ is adjacent to at least one vertex in $S$.
Let $\mathcal{D}(G)$ be the set of dominating sets of $G$.
The \emph{domination number} of $G$ is defined to be,
$$
\gamma(G)=\min\{|S|:\, S\in \mathcal{D}(G)\}.\vspace*{3mm}
$$
Now, $S\subseteq V(G)$ is a \emph{perfect dominating set} of $G$ if
every vertex  in $V(G)\setminus S$ is adjacent to exactly one vertex in $S$.
Let $\mathcal{D}^p(G)$ be the set of perfect dominating sets of $G$.
The \emph{perfect domination number} of $G$ is defined to be,
$$\gamma^p(G)=\min\{|S|:\, S\in \mathcal{D}^p(G)\}.$$
 Notice that  $\mathcal{D}^p(G)\subseteq \mathcal{D}(G)$, which implies that $\gamma(G)\le \gamma^{p}(G)$.

\medskip
The domination number  has been extensively studied. For instance, we cite the following books,  \cite{Haynes1998a,Haynes1998}. The theory of perfect domination was introduced
by Livingston and Stout in \cite{PerfectDomination} and has been studied by several authors, including \cite{MR1238868, Cockayne1993, MR2445802, MR1122221,MR3334589, MR3128528}.

\medskip
 Cockayne, et al.\ \cite{Cockayne2004} defined a {\it Roman dominating  function}, abbreviated RDF, on a graph $G$ to be a function $f: V(G)\longrightarrow \{0,1,2\}$ satisfying the condition that every vertex $u$ for which $f(u)=0$ is adjacent to at least one vertex $v$ for which $f(v)=2$. The \textit{weight} of $f$ is defined to be
 $$\omega(f)=\sum_{v\in V(G)}f(v).$$ For  $X\subseteq V(G)$ we define the weight of $X$  as  $f(X)= \sum_{v\in X}f(v)$. The \textit{Roman domination number}, denoted by   $\gamma_R(G)$, is the minimum weight among all Roman dominating functions on $G$, \textit{i.e.}, $$\gamma_R(G)=\min\{\omega(f):\, f \text{ is an RDF on } G\}.$$ An RDF of weight $\gamma_R(G)$ is called a $\gamma_R(G)$-function. Obviously,  $\gamma_R(G)\le 2\gamma(G)$ for every graph $G$. A \emph{Roman graph} is a graph $G$ with $\gamma_R(G)=2\gamma(G)$.

\medskip
Recently, a perfect version of Roman domination was introduced  by Henning, Klostermeyer and  MacGillivray  \cite{HENNING2018235}. They defined a \emph{perfect Roman dominating function}, abbreviated PRDF, as an RDF $f$ satisfying the condition that every vertex $u$ for which $f(u)=0$ is adjacent to exactly one vertex $v$ for which $f(v)=2$.
   The \textit{perfect Roman domination number}, denoted by   $\gamma_R^p(G)$, is the minimum weight among all perfect Roman dominating functions on $G$, \textit{i.e.}, $$\gamma_R^p(G)=\min\{\omega(f):\, f \text{ is a PRDF on } G\}.$$
For  results on perfect Roman domination in
graphs we cite \cite{MR4020540,Perfect-Roman-complexity-2019,Perfect-Roman-Regular-2018,Note-Perfect-Roman-2020}.

\medskip
  A PRDF  of weight $\gamma_R^p(G)$ is called a $\gamma_R^p(G)$-function. Observe that
$\gamma_R(G)\le \gamma_R^p(G)\le 2\gamma^p(G)$ for every graph $G$. Those graphs attaining the equality
$\gamma_R^p(G)=2\gamma^p(G)$ are called \emph{perfect Roman graphs}. All perfect Roman trees were characterized in \cite{PerfectRomanTrees}.

Figure \ref{FigWeakRoman} shows three copies of a graph $G$ with $\gamma_R(G)=\gamma_R^p(G)=4$.   Notice that the labellings  correspond  to the positive weights of all $\gamma_R(G)$-functions.
In particular, the labellings on the center and on the right correspond  to the positive weights of $\gamma_R^p(G)$-functions.

\begin{figure}[!h]
\centering
\begin{tikzpicture}[scale=.6, transform shape]

\node [draw, shape=circle, fill=black] (a1) at  (0,0) {};
\node at (-0.5,0) {\Large $2$};
\node [draw, shape=circle] (a2) at  (1.5,0) {};
\node [draw, shape=circle, fill=black] (a3) at  (3,0) {};
\node at (3,0.6) {\Large $2$};
\node [draw, shape=circle] (a4) at  (4.5,0) {};

\node [draw, shape=circle] (a11) at  (0,1.5) {};
\node [draw, shape=circle] (a12) at  (0,-1.5) {};

\draw(a1)--(a2)--(a3)--(a4);
\draw(a11)--(a1)--(a12);

\node [draw, shape=circle, fill=black] (b1) at  (8,0) {};
\node at (7.5,0) {\Large $2$};
\node [draw, shape=circle] (b2) at  (9.5,0) {};
\node [draw, shape=circle, fill=gray] (b3) at  (11,0) {};
\node at (11,0.6) {\Large $1$};
\node [draw, shape=circle, fill=gray] (b4) at  (12.5,0) {};
\node at (12.5,0.6) {\Large $1$};
\node [draw, shape=circle] (b11) at  (8,1.5) {};
\node [draw, shape=circle] (b12) at  (8,-1.5) {};

\draw(b1)--(b2)--(b3)--(b4);
\draw(b11)--(b1)--(b12);

\node [draw, shape=circle, fill=black] (c1) at  (16,0) {};
\node at (15.5,0) {\Large $2$};
\node [draw, shape=circle] (c2) at  (17.5,0) {};
\node [draw, shape=circle] (c3) at  (19,0) {};
\node [draw, shape=circle, fill=black] (c4) at  (20.5,0) {};
\node at (20.5,0.6) {\Large $2$};
\node [draw, shape=circle] (c11) at  (16,1.5) {};
\node [draw, shape=circle] (c12) at  (16,-1.5) {};

\draw(c1)--(c2)--(c3)--(c4);
\draw(c11)--(c1)--(c12);
\end{tikzpicture}\vspace*{-1mm}
\caption{The labellings associated to the positive weights of all $\gamma_R(G)$-functions on the same graph. The labellings on the center and on the right correspond  to the case of $\gamma_R^p(G)$-functions.}
\label{FigWeakRoman}
\end{figure}
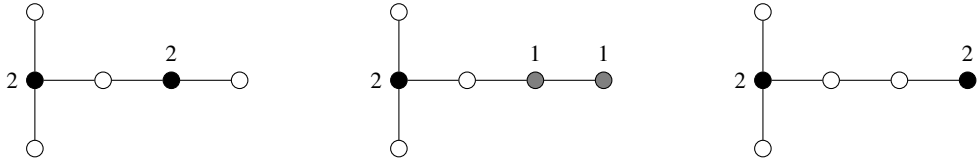

Figure \ref{Fig-Example-PerfectDomination} shows a Roman graph $G$, namely, $\gamma_R(G)=6=2\gamma(G)$. In this case, $\gamma^p(G)=6$ and $\gamma^p_R(G)=9$. The set of labelled vertices form a $\gamma^p(G)$-set and the labels describe the positive weights of a $\gamma_R^p(G)$-function.

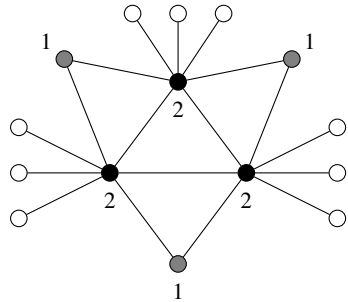
\begin{figure}[ht]
\centering
\begin{tikzpicture}[scale=.6, transform shape]

\node [draw, shape=circle, fill=black] (a1) at  (-1.5,0) {};
\node at (-1.5,-0.6) {\Large $2$};
\node [draw, shape=circle, fill=black] (a2) at  (1.5,0) {};
\node at (1.5,-0.6) {\Large $2$};
\node [draw, shape=circle, fill=black] (a3) at  (0,2) {};
\node at (0,1.4) {\Large $2$};

\node [draw, shape=circle, fill=gray] (a12) at  (0,-2) {};
\node at (0,-2.6) {\Large $1$};

\node [draw, shape=circle, fill=gray] (a13) at  (-2.5,2.5) {};
\node at (-2.9,2.9) {\Large $1$};
\node [draw, shape=circle, fill=gray] (a23) at  (2.5,2.5) {};
\node at (2.9,2.9) {\Large $1$};

\node [draw, shape=circle] (a1h1) at  (-3.5,0) {};
\node [draw, shape=circle] (a1h2) at  (-3.5,1) {};
\node [draw, shape=circle] (a1h3) at  (-3.5,-1) {};

\node [draw, shape=circle] (a2h1) at  (3.5,0) {};
\node [draw, shape=circle] (a2h2) at  (3.5,1) {};
\node [draw, shape=circle] (a2h3) at  (3.5,-1) {};

\node [draw, shape=circle] (a3h1) at  (0,3.5) {};
\node [draw, shape=circle] (a3h2) at  (-1,3.5) {};
\node [draw, shape=circle] (a3h3) at  (1,3.5) {};

\draw(a1)--(a2)--(a3)--(a1);
\draw(a1)--(a12)--(a2)--(a23)--(a3)--(a13)--(a1);

\draw(a1)--(a1h1);
\draw(a1)--(a1h2);
\draw(a1)--(a1h3);

\draw(a2)--(a2h1);
\draw(a2)--(a2h2);
\draw(a2)--(a2h3);

\draw(a3)--(a3h1);
\draw(a3)--(a3h2);
\draw(a3)--(a3h3);

\end{tikzpicture}\vspace*{-1mm}
\caption{The set of labelled vertices form a $\gamma^p(G)$-set and the labels correspond to the positive weights of a $\gamma_R^p(G)$-function.}
\label{Fig-Example-PerfectDomination}\vspace*{-4mm}
\end{figure}

\medskip
The aim of this paper is to obtain closed formulas for the perfect domination number, the Roman domination  number  and the perfect Roman domination number of lexicographic product graphs. The paper is organised as follows. In Section \ref{section-tools} we declare the general notation, terminology and basic tools needed to develop the remaining sections. In Section \ref{Perfect domination and Roman Domination} we obtain closed formulas for the perfect domination number and the Roman domination number of lexicographic product graphs. Finally, Section \ref{Section-Perfect Roman Domination} is devoted to provide tight bounds and closed formulas for the perfect Roman domination number of lexicographic product graphs.

\section{Notation, terminology and basic tools}\label{section-tools}

Throughout the paper, we will use the notation $K_k$  and $N_k$  for  a complete graph and an empty graph   of order $k$, respectively. We use the notation $u \sim v$ if $u$ and $v$ are adjacent vertices,  and $G \cong H$ if $G$ and $H$ are isomorphic graphs. For a vertex $v$ of a graph $G$, $N(v)$ will denote the set of neighbours or \emph{open neighbourhood} of $v$, \emph{i.e.}, $N(v)=\{u \in V(G):\; u \sim v\}$. The \emph{closed neighbourhood}, denoted by $N[v]$, equals $N(v) \cup \{v\}$.  Given a set $S\subseteq V(G)$ and a vertex $v\in S$, the \emph{external private neighbourhood} $\epn(v,S)$ of $v$ with respect to $S$ is defined to be $\epn(v,S)=\{u\in V(G)\setminus S: \, N(u)\cap S=\{v\}\}$.

\medskip
We denote by $\deg(v)=|N(v)|$ the degree of vertex $v$, as well as $\delta(G)=\min_{v \in V(G)}\{\deg(v)\}$ the minimum degree of $G$,   $\Delta(G)=\max_{v \in V(G)}\{\deg(v)\}$ the maximum degree of $G$ and $\n(G)=|V(G)|$ the order of $G$. Given a set $S\subseteq V(G)$, $N(S)=\cup_{v\in S}N(v)$, $N[S]=N(S)\cup S$ and the subgraph of $G$ induced by $S$ will be denoted by $G[S]$.

A set $S\subseteq V(G)$ is a \emph{total dominating set} of a graph $G$ without isolated vertices  if every vertex $v\in V(G)$ is adjacent to at least one vertex in $S$. Let $\mathcal{D}_t(G)$ be the set of total dominating sets of $G$.
The \emph{total domination number} of $G$ is defined to be,
$$\gamma_t(G)=\min\{|S|:\, S\in \mathcal{D}_t(G)\}.$$

 By definition, $\mathcal{D}_t(G)\subseteq \mathcal{D}(G)$, so that $\gamma(G)\le \gamma_t(G)$. Furthermore, $\gamma_t(G)\le 2\gamma(G)$.
 We define a $\gamma_t(G)$-set as a set $S\in \mathcal{D}_t(G)$
with $|S|=\gamma_t(G)$.  The same agreement will be assumed for optimal parameters associated to other characteristic sets defined in the paper.
 For instance, a $\gamma(G)$-set will be a set $S\in \mathcal{D}(G)$
with $|S|=\gamma(G)$.

\medskip
A graph invariant closely related to the domination number is the  packing number. A  set $S\subseteq V(G)$   is a \emph{packing} if $N[u]\cap N[v]= \varnothing$ for every pair of different vertices $u,v\in S$. We define
 $$\wp(G)=\{S\subseteq V(G): S \text{ is a packing  of } G \}.$$
The \emph{packing number}, denoted by  $\rho(G)$, is the maximum cardinality among all packings of $G$, i.e.,
$$\rho(G)=\max \{|S|:\,  S\in \wp(G) \}.$$
Obviously, $\gamma(G)\ge \rho(G)$. Furthermore,  Meir and Moon \cite{MR0401519} showed in 1975 that $\gamma(T)= \rho(T)$ for every tree $T$.  We would point out  that, in general, these $\gamma(T)$-sets and $\rho(T)$-sets  are not identical. Notice that $\mathcal{D}(G) \cap \wp(G)\ne \varnothing$
if and only if there exists a $\gamma(G)$-set which is a $\rho(G)$-set. A graph $G$ is an efficient closed domination graph if $\mathcal{D}(G) \cap \wp(G)\ne \varnothing$.

\medskip
A set $S\subseteq V(G)$ is an \emph{open packing}, if $N(u)\cap N(v)=\varnothing$ for every pair of different  vertices $u,v\in S$.  We define
 $$\wp_o(G)=\{S\subseteq V(G): S \text{ is an open packing  of } G \}.$$
 The \emph{open packing number} of $G$, denoted by $\rho_o(G)$, is the maximum cardinality among all open packings of $G$, i.e.,
$$\rho_o(G)=\max \{|S|:\,  S\in \wp_o(G) \}.$$
By definition, $\wp(G)\subseteq \wp_o(G)$, so that $\rho(G)\le \rho_o(G)$ for every graph $G$, and $\rho_o(G)\le \gamma_t(G)$ for every graph $G$ without isolated vertices.  Besides, if $S\in \wp_o(G)$, then  every vertex of $G[S]$ has degree at most one, which implies that we can write $S=S_0\cup S_1$, where $S_0$ is the set of isolated vertices of $G[S]$ and $S_1=S\setminus S_0$. Obviously, $S_1=\varnothing$ if and only if $S\in \wp(G)$.

\medskip
A graph $G$ is an \emph{efficient open domination graph} if there exists
a set $D$, called an \emph{efficient open dominating set}, for which $V(G)=\cup_{u\in D}N(u)$ and $N(u)\cap N(v)=\varnothing$
for every pair  of distinct vertices $u,v\in D$. As shown in \cite{Kuziak2014e}, if $G$ is an efficient open domination graph with an efficient open dominating set $D$, then $\gamma_t(G)=|D|$.
Hence, the following remark holds.

\begin{remark}\label{efficient open domination graph}
A   graph $G$ is an efficient open domination graph
if and only if  there exists  $S\in \mathcal{D}^p(G)$  such that $G[S] \cong \cup K_2$. In such a case, $|S|=\gamma_t(G)=\rho_o(G)$.
\end{remark}

\begin{corollary}
If $G$ is an efficient open domination graph, then $\gamma^p(G)\le \gamma_t(G)$.
\end{corollary}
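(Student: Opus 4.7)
The statement is essentially an immediate consequence of Remark~\ref{efficient open domination graph}, so the plan is very short. My approach is to unpack the remark and invoke the definition of $\gamma^p(G)$ as a minimum.

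Assume $G$ is an efficient open domination graph. By Remark~\ref{efficient open domination graph}, there exists a set $S\in \mathcal{D}^p(G)$ (that is, a perfect dominating set of $G$) such that $G[S]\cong \cup K_2$, and moreover $|S|=\gamma_t(G)$. So $S$ is, in particular, one specific element of the collection $\mathcal{D}^p(G)$ over which $\gamma^p(G)$ is minimised.

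Hence, by the definition
\[
\gamma^p(G)=\min\{|S'|:\, S'\in \mathcal{D}^p(G)\},
\]
we obtain $\gamma^p(G)\le |S|=\gamma_t(G)$, which is the desired inequality.

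There is no real obstacle in this argument; the substance of the result is already contained in Remark~\ref{efficient open domination graph}, and the corollary only records the numerical comparison it implies. The only thing to be careful about is not confusing $\gamma^p(G)$ (perfect domination) with $\gamma_t(G)$ (total domination): the remark supplies a \emph{perfect} dominating set whose size happens to equal $\gamma_t(G)$, and this is precisely what makes the inequality $\gamma^p(G)\le \gamma_t(G)$ follow from minimality.
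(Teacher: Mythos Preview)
Your argument is correct and matches the paper's approach: the paper states this corollary immediately after Remark~\ref{efficient open domination graph} with no proof, treating it as an immediate consequence, which is exactly what you do by extracting from the remark a perfect dominating set of size $\gamma_t(G)$ and invoking the minimality defining $\gamma^p(G)$.
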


Given two nontrivial graphs $G$ and $H$, we define the following properties, which will become  important tools in the next sections.

\begin{enumerate}[]
\itemsep=0.85pt
\item $\mathcal{P}_1(G,H)$: \, $\delta(H)=0$ and  $G$ is an efficient open domination graph.
\item $\mathcal{P}_2(G,H)$: \, $\gamma(H)=1$ and  $G$ is an efficient closed domination graph.
\item $\mathcal{P}_3(G,H)$: \, $\delta(H)=0$, $G$ is an efficient open domination graph and $\gamma^p(G)=\gamma_t(G)$.
\end{enumerate}

Let $f: V(G)\longrightarrow \{0,1,2\}$ be a function on $G$ and let $V_i=\{v\in V(G):\, f(v)=i\}$, where $i\in \{0,1,2\}$. We will identify $f$ with the subsets $V_0,V_1,V_2$, and so we will use the unified notation $f(V_0,V_1, V_2)$ for the function and these associated subsets.

An RDF $f(V_0,V_1, V_2)$ on $G$ is a \emph{total Roman dominating function} if $V_1\cup V_2\in \mathcal{D}_t(G)$   \cite{AbdollahzadehAhangarHenningSamodivkinEtAl2016}. The \textit{total Roman domination number}, denoted by   $\gamma_{tR}(G)$, is the minimum weight among all total Roman dominating functions on $G$. By definition, $\gamma_R(G)\le \gamma_{tR}(G)$.

The \emph{lexicographic product} of two graphs $G$ and $H$ is the graph $G \circ H$ whose vertex set is  $V(G \circ H)=V(G)\times V(H)$ and $(u,v)(x,y) \in E(G \circ H)$ if and only if $ux \in E(G)$ or $u=x$ and $vy \in E(H)$.
For simplicity, the neighbourhood of  $(x,y)\in V(G\circ H)$ will be denoted by $N(x,y)$ instead of $N((x,y))$, and for any PRDF $f$ on $G\circ H$ we will write $f(x,y)$ instead of $f((x,y))$.

\medskip
Notice that  for any $u\in V(G)$  the subgraph of $G\circ H$ induced by $\{u\}\times V(H)$ is isomorphic to $H$. We will denote this subgraph by $H_u$.
For any $u \in V(G)$ and any function $f$ on $G\circ H$ we define $$f(H_u)=\sum_{v\in V(H)}f(u,v) \text{ and }f[H_u]=\sum_{x\in N[u]}f(H_x).$$

For basic properties of the lexicographic product of two graphs we suggest the  books  \cite{Hammack2011,Imrich2000}.
A main problem in the study of product of graphs consists of finding exact values or sharp
bounds for specific parameters of the product of two graphs and express
them in terms of invariants of the factor graphs. In particular,   we cite the following works on domination theory of lexicographic product graphs. For instance, the reader is referred to \cite{MR3363260,Nowakowski1996} for  the domination number, \cite{DD-lexicographic} for the double domination number, \cite{SUmenjak:2012:RDL:2263360.2264103} for the Roman domination number, \cite{TRDF-Lexicographic-2020,Dorota2019} for the total Roman domination number, \cite{MR3057019} for the rainbow domination number, \cite{Dettlaff-LemanskaRodrZuazua2017} for the super domination number, \cite{Valveny2017} for the weak Roman domination number, \cite{TPlexicographic-2019} for the total weak Roman domination number and the secure total domination number, \cite{w-domination} for the Italian  domination number and  \cite{MR3200151} for the doubly connected domination number.

\medskip
 For the remainder of the paper, definitions will be introduced whenever a concept is needed.

\section{Perfect domination and Roman Domination in lexicographic product graphs} \label{Perfect domination and Roman Domination}

The next theorem merges two results obtained in \cite{SUmenjak:2012:RDL:2263360.2264103} and \cite{Zhang2011}.

\begin{theorem}[{\rm\cite{SUmenjak:2012:RDL:2263360.2264103}} and {\rm\cite{Zhang2011}}]\label{teo-char-gamma}

For any graph $G$ with no isolated vertex and any nontrivial graph $H$,

$$\gamma(G\circ H)=\left\{\begin{array}{ll}
                                 \gamma(G) & \mbox{if $\gamma(H)=1$,} \\[5pt]
                                 \gamma_t(G) & \mbox{if $\gamma(H)\geq 2$.}
                               \end{array}\right.$$

\end{theorem}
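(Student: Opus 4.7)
The plan is to split the argument according to the value of $\gamma(H)$ and in each case establish matching upper and lower bounds on $\gamma(G\circ H)$. The upper bounds will be built from explicit ``fibre'' dominating sets of the form $S\times\{v\}$, with $S$ a suitable dominating-type set in $G$; the lower bounds will come from projecting a dominating set $D$ of $G\circ H$ down to $V(G)$ via the non-empty fibres $D_x=\{y\in V(H):(x,y)\in D\}$ and exploiting what the adjacency rule of the lexicographic product forces on the resulting projection.

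In the case $\gamma(H)=1$, I would pick a dominating vertex $v\in V(H)$ and a $\gamma(G)$-set $S$. The set $S\times\{v\}$ is dominating in $G\circ H$: if $(x,y)\notin S\times\{v\}$, then either $x\notin S$ and a $G$-neighbour $x'\in S$ gives $(x,y)\sim(x',v)$, or $x\in S$ with $y\ne v$, so $y\sim v$ in $H$ gives $(x,y)\sim(x,v)$. This yields $\gamma(G\circ H)\le\gamma(G)$. Conversely, for any dominating set $D$ of $G\circ H$ the projection $S=\{x:D_x\ne\varnothing\}$ is a dominating set of $G$: a vertex $x\notin S$ is dominated in $G\circ H$ through some $(x,y)$, and since $D_x=\varnothing$ this domination must come from a $G$-edge, placing a $G$-neighbour of $x$ in $S$. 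Hence $\gamma(G)\le|S|\le|D|$.

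The case $\gamma(H)\ge 2$ is where the real work sits. The upper bound $\gamma(G\circ H)\le\gamma_t(G)$ follows by taking a $\gamma_t(G)$-set $S$ (which exists because $G$ has no isolated vertex) and any $v\in V(H)$: totality of $S$ in $G$ ensures every $x\in V(G)$ has some $x'\in S$ adjacent to $x$ in $G$, so $(x',v)$ dominates $(x,y)$ for every $y$. The main obstacle, as I see it, is the reverse inequality $\gamma(G\circ H)\ge\gamma_t(G)$. Given a $\gamma(G\circ H)$-set $D$, I would form $S=\{x:D_x\ne\varnothing\}$ and single out the ``bad'' subset $T=\{x\in S:N(x)\cap S=\varnothing\}$, where $N(x)$ refers to the neighbourhood in $G$. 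The crucial observation is that for $x\in T$, every $(x,y)$ with $y\notin D_x$ must be dominated from inside the fibre $H_x$, which forces $D_x$ to be a dominating set of $H$ and hence $|D_x|\ge\gamma(H)\ge 2$. Using that $G$ has no isolated vertex, I would attach to each $t\in T$ a chosen neighbour $t^*\in V(G)$ and set $S'=S\cup\{t^*:t\in T\}$; a short case check confirms $S'\in\mathcal{D}_t(G)$. The cardinality bookkeeping $|D|=\sum_{x\in S\setminus T}|D_x|+\sum_{x\in T}|D_x|\ge|S\setminus T|+2|T|=|S|+|T|\ge|S'|\ge\gamma_t(G)$ then closes the argument.
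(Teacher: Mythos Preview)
Your proof is correct and complete. Note, however, that the paper does not actually prove this theorem: it is stated as a citation of results from \cite{SUmenjak:2012:RDL:2263360.2264103} and \cite{Zhang2011}, with no proof given. So there is no ``paper's own proof'' to compare against here.

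That said, your argument is the standard one and matches what appears in the cited sources. The projection $S=\{x:D_x\ne\varnothing\}$ is the right object to study, and your handling of the lower bound in the case $\gamma(H)\ge 2$ is clean: identifying the ``bad'' set $T$ of vertices of $S$ isolated in $G[S]$, observing that $|D_x|\ge\gamma(H)\ge 2$ for $x\in T$, and then padding $S$ with one neighbour per bad vertex to obtain a total dominating set of size at most $|S|+|T|\le|D|$. The case check that $S'$ is indeed a total dominating set is straightforward (every $x\in T$ sees its chosen $t^*$; every $x\in S\setminus T$ sees something in $S$ by definition of $T$; every $x\notin S$ sees something in $S$ because $S$ dominates $G$), so nothing is missing.
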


As the following result shows, when computing the perfect domination number of lexicographic product graphs $G\circ H$, where $G$ is connected and $H$ is not trivial,
we have to take into account that the class of graphs $G\circ H$  satisfies a certain trichotomy, as it is divided into three categories, i.e., the class of graphs $G\circ H$ for which $\mathcal{P}_1(G,H)$ holds, the class of graphs $G\circ H$ for which $\mathcal{P}_2(G,H)$ holds, and the class where neither $\mathcal{P}_1(G,H)$ nor $\mathcal{P}_2(G,H)$ holds.

\begin{theorem}\label{teo-char-gammaPerfectLexic}
For any connected graph $G$ and any nontrivial graph $H$,

$$\gamma^p(G\circ H)=\left\{\begin{array}{ll}
                                 \gamma_t(G) & \text{ if }\mathcal{P}_1(G,H) \text{ holds,}\\[4pt]
                                 \gamma(G) & \text{ if }\mathcal{P}_2(G,H) \text{ holds,} \\[4pt]
                                  \n(G)\n(H) & \text{  otherwise.}
                               \end{array}\right.$$
\end{theorem}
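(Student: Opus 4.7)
The plan is to treat the first two cases via direct constructions paired with Theorem~\ref{teo-char-gamma} (using $\gamma^p(G\circ H)\ge \gamma(G\circ H)$), and to handle the ``otherwise'' case by a structural analysis of a hypothetical proper PDS.

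For $\mathcal{P}_2(G,H)$ I would take a $\gamma(G)$-set $W$ that is also a packing (which exists because $G$ is efficient closed domination) and a universal vertex $v$ of $H$, and verify that $W\times\{v\}$ is a PDS: a vertex $(w,y)$ with $w\in W$ and $y\ne v$ is dominated only by $(w,v)$, since the packing property forbids any other $w'\in W$ from being adjacent to $w$ in $G$; and a vertex $(x,y)$ with $x\notin W$ is dominated uniquely by $(w,v)$, where $w$ is the unique element of $W$ dominating $x$. Theorem~\ref{teo-char-gamma} supplies the matching lower bound $\gamma(G\circ H)=\gamma(G)$. For $\mathcal{P}_1(G,H)$, Remark~\ref{efficient open domination graph} gives $W\in\mathcal{D}^p(G)$ with $G[W]\cong\cup K_2$ and $|W|=\gamma_t(G)$; taking $v$ to be any isolated vertex of $H$, an analogous check shows $W\times\{v\}$ is a PDS, and Theorem~\ref{teo-char-gamma} yields $\gamma(G\circ H)=\gamma_t(G)$ (note that $\delta(H)=0$ forces $\gamma(H)\ge 2$).

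In the third case the upper bound $\gamma^p(G\circ H)\le \n(G)\n(H)$ is trivial. For the matching lower bound I would argue by contradiction: suppose some PDS $S$ satisfies $|S|<\n(G)\n(H)$, and for each $u\in V(G)$ set $S_u=\{y\in V(H):(u,y)\in S\}$. The PDS property at $(u,v)\notin S$ gives, for every $u$ with $S_u\ne V(H)$ and every $v\in V(H)\setminus S_u$, the key identity $|N_H(v)\cap S_u|+\sum_{x\in N_G(u)}|S_x|=1$. Connectedness of $G$ rules out $S_u=V(H)$ for any $u$ (a ``full'' fibre would force each neighbouring fibre to be full). Writing $T_0=\{u:S_u=\varnothing\}$ and $T_*=V(G)\setminus T_0$, each $u\in T_0$ then has a unique $T_*$-neighbour $\pi(u)$ with $|S_{\pi(u)}|=1$, while each $u\in T_*$ falls into regime (A) -- $S_u$ is a PDS of $H$ and $N_G(u)\subseteq T_0$ -- or regime (B) -- $S_u$ is a union of components of $H$, $u$ has exactly one $T_*$-neighbour (with $|S|=1$), and all other neighbours in $T_0$.

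The conclusion then splits on $\gamma(H)$. If $\gamma(H)=1$, a universal vertex of $H$ rules out regime (B) (it would force $V(H)\subseteq S_u$), so every $u\in T_*$ is in (A); the map $\pi$ together with uniqueness of $T_*$-neighbours yields $|S_u|=1$ and pairwise disjoint closed $G$-neighbourhoods for all $u\in T_*$, whence $T_*\in\mathcal{D}(G)\cap\wp(G)$ and $\mathcal{P}_2$ holds, a contradiction. If $\gamma(H)\ge 2$, regime (A) would demand $|S_u|\ge\gamma^p(H)\ge 2$; but for any $u'\in T_0\cap N_G(u)$ we would have $\pi(u')=u$ with $|S_u|=1$, excluding regime (A) altogether. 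The same style of forcing (via a $T_0$-neighbour of $u$ if one exists, otherwise via the unique $T_*$-neighbour of $u$) gives $|S_u|=1$ for every $u\in T_*$, and regime (B) with $|S_u|=1$ makes the unique entry of $S_u$ isolated in $H$; consequently $\delta(H)=0$, $G[T_*]\cong\cup K_2$ and $T_*\in\mathcal{D}^p(G)$, so by Remark~\ref{efficient open domination graph} $G$ is an efficient open domination graph and $\mathcal{P}_1$ holds, again contradicting the hypothesis. The main obstacle is precisely this structural analysis: keeping the $T_0/T_*$ partition aligned with the (A)/(B) dichotomy and extracting $|S_u|=1$ uniformly on $T_*$ in the $\gamma(H)\ge 2$ branch.
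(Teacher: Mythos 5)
Your proposal is correct, and its core coincides with the paper's own argument: both hinge on a fibre-by-fibre analysis of a perfect dominating set $S$ of $G\circ H$, showing that unless $S=V(G\circ H)$ every fibre meets $S$ in at most one vertex, and that the set of vertices of $G$ with nonempty fibre is either an efficient open dominating set of $G$ paired with an isolated vertex of $H$ (so $\mathcal{P}_1(G,H)$ holds and the cardinality is $\gamma_t(G)$ via Remark \ref{efficient open domination graph}) or a dominating packing paired with a universal vertex of $H$ (so $\mathcal{P}_2(G,H)$ holds and the cardinality is $\gamma(G)$). The packaging, however, differs in ways worth noting. The paper reads all three values directly off a $\gamma^p(G\circ H)$-set and leaves the achievability of $\gamma_t(G)$ and $\gamma(G)$ under $\mathcal{P}_1$ and $\mathcal{P}_2$ essentially implicit, whereas you supply the explicit perfect dominating sets $W\times\{v\}$ for the upper bounds and import the matching lower bounds $\gamma(G\circ H)=\gamma(G)$ or $\gamma_t(G)$ from Theorem \ref{teo-char-gamma} (correctly observing that $\delta(H)=0$ with $\n(H)\ge 2$ forces $\gamma(H)\ge 2$). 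Your counting identity $|N_H(v)\cap S_u|+\sum_{x\in N_G(u)}|S_x|=1$, the exclusion of full fibres by connectivity, and the $T_0/T_*$ partition with regimes (A)/(B) make fully explicit the paper's brisk claim that an $S$-vertex over an isolated vertex of $G[W_1]$ forces a universal vertex of $H$ while degree one forces an isolated vertex of $H$, including why the two regimes cannot mix (your split on $\gamma(H)$, together with the symmetry of the unique $T_*$-neighbour relation, cleanly yields $|S_u|=1$ on $T_*$). What your route buys is completeness on the existence side and a self-contained contradiction argument for the third case; what the paper's buys is brevity and independence from Theorem \ref{teo-char-gamma}. The only caveat, shared with the paper, is that the argument tacitly needs $G$ nontrivial so that every vertex of $G$ has a neighbour, which is consistent with the definitions of $\mathcal{P}_1$ and $\mathcal{P}_2$ and with the hypotheses of Theorem \ref{teo-char-gamma}.
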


\begin{proof}
Let $S$ be a $\gamma^p(G\circ H)$-set and define  $W_0=\{x\in V(G):\, V(H_x)\cap  S=\varnothing\}$ and $W_1=\{x\in V(G):\, |V(H_x)\cap  S|=1\}$.
We differentiate, the following two cases.

\vspace{0,2cm}
\noindent
Case 1.
There exists $x\in V(G)$ such that $|V(H_x)\cap  S|\ge 2$. Since  $\n(H)\ge 2$, we deduce that $N[x]\times V(H)\subseteq S$, which implies that $S=V(G\circ H)$, i.e., $\gamma^p(G\circ H)=|S|=\n(G)\n(H)$.

\vspace{0,2cm}
\noindent
Case 2. $|V(H_x)\cap  S|\le 1$ for every $x\in V(G)$. Obviously,  $W_1\in \mathcal{D}^p(G)$ and, since $V(H_x)\setminus  S\ne \varnothing$ for every $x\in V(G)$, we conclude that  $S\in \wp_o(G\circ H)$. Let $(x,y)\in S$. If $x$ is an isolated vertex of $G[W_1]$, then $y$ is a universal vertex of $H$, while if $x$ has degree one, then $y$ is an isolated vertex of $H$. Therefore, we have the following two complementary subcases.

\vspace{0,2cm}

\noindent
Subcase 2.1. $\mathcal{P}_1(G,H)$ holds, i.e., $y$ is an isolated vertex of $H$,  $W_1\in \mathcal{D}^p(G)$ and $G[W_1]\cong \cup K_2$. In this case,  Remark \ref{efficient open domination graph} leads to $|W_1|=\gamma_t(G)$. Hence,  $\gamma^p(G\circ H)=|S|=|W_1\times \{y\}|=|W_1|=\gamma_t(G)$.

\vspace{0,2cm}
\noindent
Subcase 2.2. $\mathcal{P}_2(G,H)$ holds, i.e.,  $y$ is a universal vertex of $H$, $W_1$ is $\rho(G)$-set and also a $\gamma(G)$-set. In this case,  $\gamma^p(G\circ H)=|S|=|W_1\times \{y\}|=|W_1|=\gamma(G)$.
\end{proof}

The Roman domination number of the lexicographic product of two connected graphs $G$ and $H$ was studied in \cite{SUmenjak:2012:RDL:2263360.2264103}. Obviously, the connectivity of $G\circ H$ only depends on the connectivity of $G$. Since we need to consider the case where $H$ is not necessarily connected,
 we make next  the necessary modifications to adapt the results obtained in \cite{SUmenjak:2012:RDL:2263360.2264103} to the general case in which $H$ is not necessarily connected.

\begin{lemma}\label{lem-principal}
Let $G$ be a graph with no isolated vertex and  $H$ a  nontrivial graph. Let $f(V_0,V_1,V_2)$ be a $\gamma_R(G\circ H)$-function,  $A_f=\{x\in V(G):\, V(H_x)\cap V_2\ne \varnothing\}$ and $B_f=\{x\in V(G)\setminus A_f:\, V(H_x)\cap V_1\ne \varnothing\}$.
If $|V_2|$ is maximum among all $\gamma_R(G\circ H)$-functions, then $A_f\in \mathcal{D}(G)$  and $B_f=\varnothing$.
\end{lemma}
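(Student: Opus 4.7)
The plan is to argue both claims by contradiction, using two local modifications of $f$ in the column $H_x$ of a putative bad vertex $x\in V(G)$.

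\emph{Step 1: $A_f\in\mathcal{D}(G)$.} Suppose, for the sake of contradiction, that there exists $x\in V(G)\setminus A_f$ with $N[x]\cap A_f=\varnothing$. Then no vertex of $\bigl(N[x]\bigr)\times V(H)$ lies in $V_2$. Since the neighbourhood of any $(x,y)\in V(H_x)$ inside $G\circ H$ is contained in $\bigl(N(x)\times V(H)\bigr)\cup\bigl(\{x\}\times N_H(y)\bigr)\subseteq N[x]\times V(H)$, no vertex in $H_x$ can have a neighbour in $V_2$. Since $f$ is an RDF this forces $V(H_x)\subseteq V_1$, and in particular $f(H_x)=\n(H)\ge 2$. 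Now pick any $y_0\in V(H)$ and define $f'$ by $f'(x,y_0)=2$, $f'(x,y)=0$ for $y\neq y_0$, and $f'\equiv f$ off $H_x$. The function $f'$ is an RDF: inside $H_x$ every zero is dominated by $(x,y_0)$, and outside $H_x$ no vertex loses any $2$-neighbour since $H_x$ contained none. A direct computation gives $\omega(f')-\omega(f)=2-\n(H)\le 0$, with equality iff $\n(H)=2$. If $\n(H)\ge 3$ we contradict the optimality of $f$; if $\n(H)=2$ then $f'$ is also a $\gamma_R(G\circ H)$-function but $|V_2(f')|=|V_2|+1$, contradicting the maximality of $|V_2|$.

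\emph{Step 2: $B_f=\varnothing$.} Suppose $x\in B_f$, so $x\notin A_f$ but $V(H_x)\cap V_1\ne\varnothing$. By Step 1, $x$ has some neighbour $x'\in A_f$, and so there is $y'\in V(H)$ with $f(x',y')=2$. Now define $f''$ by $f''(x,y)=0$ for every $y\in V(H)$ and $f''\equiv f$ elsewhere. Every $(x,y)$ is adjacent to $(x',y')$ in $G\circ H$, so the new zeros in $H_x$ are dominated by the $2$-vertex $(x',y')$; vertices in $H_x$ that were already zero were dominated by some $2$-neighbour outside $H_x$ (since $V(H_x)\cap V_2=\varnothing$), which is preserved; and vertices outside $H_x$ retain all their $2$-neighbours. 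Thus $f''$ is an RDF of weight $\omega(f'')=\omega(f)-|V(H_x)\cap V_1|<\omega(f)$, contradicting the optimality of $f$.

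The main subtlety is ensuring that each modification produces a legitimate RDF, which reduces to checking that no vertex outside $H_x$ loses a required $2$-neighbour; this works because in both cases $H_x$ contains no $2$-vertex of $f$ to begin with. The slightly delicate point is Step~1 when $\n(H)=2$, where the weight is unchanged and one must invoke the additional maximality of $|V_2|$ to derive the contradiction; this is precisely why that extra hypothesis is built into the statement.
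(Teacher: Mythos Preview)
Your Step~2 is fine, but Step~1 contains a genuine gap. You set $f'(x,y_0)=2$ and $f'(x,y)=0$ for $y\ne y_0$, and then assert that ``inside $H_x$ every zero is dominated by $(x,y_0)$''. In the lexicographic product, $(x,y)$ is adjacent to $(x,y_0)$ only when $y\sim y_0$ in $H$; so unless $y_0$ is a universal vertex of $H$, the vertices $(x,y)$ with $y\notin N_H[y_0]$ are left with no $2$-neighbour (recall $N[x]\cap A_f=\varnothing$, so there are no $2$-vertices in neighbouring columns either). Thus $f'$ need not be an RDF, and the argument collapses exactly when $\gamma(H)\ge 2$.

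The fix, which is what the paper does, is to place the new $2$-vertex in a \emph{neighbouring} column rather than in $H_x$ itself: pick any $x'\in N(x)$ and $v\in V(H)$, set $f'(H_x)=0$ and $f'(x',v)=2$, and leave the rest unchanged. Now every $(x,y)$ is adjacent to $(x',v)$ regardless of $H$, and since $x'\in N(x)\subseteq V(G)\setminus A_f$ we know $f(x',v)\le 1$, so $\omega(f')\le\omega(f)$ while $|V_2'|>|V_2|$, yielding the desired contradiction. (The paper in fact organises the proof slightly differently: it first observes trivially that $A_f\cup B_f$ dominates $G$, then uses this same neighbouring-column modification to show $B_f=\varnothing$; your two-step layout with Step~1 first is perfectly viable once the modification is corrected.)
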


\begin{proof}
Let $f(V_0,V_1,V_2)$ be a $\gamma_R(G\circ H)$-function such that $|V_2|$ is maximum among all $\gamma_R(G\circ H)$-functions.
If $x\in V(G)\setminus (A_f\cup B_f)$, then $V(H_x)\subseteq V_0$, which implies that $N(x)\cap A_f\neq \varnothing$. Hence, $A_f\cup B_f\in \mathcal{D}(G)$.

\medskip
Now, suppose that there exists $u\in B_f$. Observe that $(N(u)\times V(H))\cap V_2= \varnothing$, and so $V(H_u)\subseteq V_1$.  Given $u'\in N(u)$ and $v\in V(H)$, we define a function  $f'(V_0',V_1',V_2')$ on $G\circ H$ by $f'(H_u)=0$, $f'(u',v)=2$ and $f'(x,y)=f(x,y)$ for the remaining vertices. Notice that $f'$ is a RDF on $G\circ H$ with $|V_2'|>|V_2|$ and, since $H$ is a nontrivial graph,  $f(H_u)=|V(H_u)|\ge 2$,  so that $\omega(f')\leq \omega(f)$, which is a contradiction. Therefore, $B_f=\varnothing$  and  $A_f\in \mathcal{D}(G)$.
\end{proof}

The following result is a direct consequence of Lemma \ref{lem-principal}.

\begin{corollary}\label{LowerBoundRomanLex}
For any graph $G$ without isolated vertices and any nontrivial graph $H$,
$$\gamma_R(G\circ H)\geq 2\gamma(G).$$
\end{corollary}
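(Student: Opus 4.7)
The plan is to invoke Lemma~\ref{lem-principal} directly. I would begin by choosing a $\gamma_R(G\circ H)$-function $f(V_0,V_1,V_2)$ for which $|V_2|$ is maximum among all such functions; such a choice exists since there are only finitely many RDFs achieving the minimum weight. Applying the lemma to $f$ yields $A_f\in\mathcal{D}(G)$, which in particular forces $|A_f|\ge \gamma(G)$.

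Next I would exploit the definition of $A_f$: for each $x\in A_f$, the fiber $H_x$ contains at least one vertex with label $2$, so $|V_2\cap V(H_x)|\ge 1$. Summing over the (disjoint) fibers indexed by $A_f$ gives
\[
|V_2|\;=\;\sum_{x\in V(G)}|V_2\cap V(H_x)|\;\ge\;\sum_{x\in A_f}|V_2\cap V(H_x)|\;\ge\;|A_f|\;\ge\;\gamma(G).
\]
Finally I would combine this with the trivial lower bound $\omega(f)\ge 2|V_2|$ to conclude
\[
\gamma_R(G\circ H)\;=\;\omega(f)\;\ge\;2|V_2|\;\ge\;2\gamma(G),
\]
which is the desired inequality. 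The conclusion $B_f=\varnothing$ in the lemma is not strictly needed for this bound, but it confirms that the weight carried by $V_1$ on fibers outside $A_f$ is zero, so no slack is being wasted.

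There is essentially no obstacle here, since the lemma does all the real work; the only mild subtlety is remembering to select an optimal $f$ with $|V_2|$ maximized before invoking it (an arbitrary $\gamma_R$-function would not suffice, because Lemma~\ref{lem-principal} can fail otherwise). Everything else is a one-line counting argument over the fibers $H_x$.
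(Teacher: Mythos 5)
Your proposal is correct and is essentially the argument the paper intends when it calls the corollary a direct consequence of Lemma~\ref{lem-principal}: pick a $\gamma_R(G\circ H)$-function with $|V_2|$ maximum, use $A_f\in\mathcal{D}(G)$ to get $|V_2|\ge|A_f|\ge\gamma(G)$, and conclude $\omega(f)\ge 2|V_2|\ge 2\gamma(G)$. Nothing further is needed.
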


\begin{theorem}\label{lem-upper-bound}{\rm \cite{SUmenjak:2012:RDL:2263360.2264103}}
For any graph $G$ without isolated vertices and any graph $H$,
$$\gamma_R(G\circ H)\leq 2\gamma_t(G).$$
\end{theorem}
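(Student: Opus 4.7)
The plan is to build an explicit Roman dominating function on $G\circ H$ whose weight is exactly $2\gamma_t(G)$, starting from a total dominating set of $G$.

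First I would fix a $\gamma_t(G)$-set $D\subseteq V(G)$, which exists because $G$ has no isolated vertex, and pick an arbitrary vertex $v\in V(H)$. Then I would define a function $f(V_0,V_1,V_2)$ on $G\circ H$ by setting
\[
V_2 = D\times\{v\},\qquad V_1 = \varnothing,\qquad V_0 = V(G\circ H)\setminus V_2,
\]
so that $\omega(f)=2|D|=2\gamma_t(G)$. It only remains to show that $f$ is an RDF.

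To verify the Roman condition, I would take an arbitrary vertex $(x,y)\in V_0$. Since $D\in\mathcal{D}_t(G)$, there exists $x'\in D$ with $x\sim x'$ in $G$ (this works regardless of whether $x\in D$ or not, which is the key use of \emph{total} domination, rather than ordinary domination). By the definition of the lexicographic product, the adjacency $x\sim x'$ in $G$ forces $(x,y)\sim(x',v)$ in $G\circ H$ for every choice of second coordinate. Thus $(x,y)$ has the neighbour $(x',v)\in V_2$, confirming that $f$ is an RDF and yielding $\gamma_R(G\circ H)\le \omega(f)=2\gamma_t(G)$.

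There is really no hard step here: the only subtlety is remembering that we need a \emph{total} dominating set, since vertices $(x,y)$ with $x\in D$ and $y\ne v$ lie in $V_0$ and still need a neighbour in $V_2$, which is exactly what totality of $D$ guarantees. The construction also works uniformly whether $H$ is trivial or not, so no case split on $H$ is required.
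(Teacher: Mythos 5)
Your construction is correct: placing weight $2$ on $D\times\{v\}$ for a $\gamma_t(G)$-set $D$ and weight $0$ elsewhere gives an RDF precisely because totality of $D$ supplies a neighbour $(x',v)\in V_2$ even for vertices $(x,y)$ with $x\in D$, $y\neq v$. The paper itself cites this theorem from \cite{SUmenjak:2012:RDL:2263360.2264103} without reproducing a proof, and your argument is exactly the standard one behind that reference, so there is nothing to add.
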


Now, we introduce the definition of domination couple given in \cite{SUmenjak:2012:RDL:2263360.2264103}. We say that an ordered couple $(A,B)$ of disjoint sets $A,B\subseteq V(G)$ is a \emph{dominating couple} of $G$ if every vertex $x\in V(G)\setminus B$ satisfies that  $N(x)\cap (A\cup B)\neq \varnothing$. Also, we define the parameter $\zeta(G)$ as follows.
$$\zeta(G)=\min\{2|A|+3|B|:\, (A,B) \text{ is a dominating couple of } G\}.$$
We say that a dominating couple $(A,B)$ of $G$ is a $\zeta(G)$-couple if $\zeta(G)=2|A|+3|B|$. With this notation in mind, we state the following result.

\begin{theorem}\label{teo-principal-generalizado}
For any graph $G$ without isolated vertices and any nontrivial graph $H$,
$$\gamma_R(G\circ H)=\left\{\begin{array}{ll}
                    2\gamma(G) & \mbox{if $\Delta(H)=\n(H)-1$,} \\[4pt]
                    \zeta(G) & \mbox{if $\Delta(H)=\n(H)-2$,} \\[4pt]
                    2\gamma_t(G) & \mbox{if $\Delta(H)\leq\n(H)-3$.}
                               \end{array}
\right.$$
\end{theorem}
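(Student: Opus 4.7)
The plan is to establish each case by pairing a lower bound with a matching upper bound. Case 1 ($\Delta(H) = \n(H) - 1$) is quick: the lower bound is Corollary \ref{LowerBoundRomanLex}, and the upper bound comes from the RDF that assigns weight $2$ to $(x,v)$ for every $x$ in a $\gamma(G)$-set $D$, where $v \in V(H)$ is a universal vertex. Universality guarantees domination inside each $H_y$ with $y \in D$, while $G$-adjacencies handle vertices $(y,z)$ with $y \notin D$.

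For the lower bounds in the remaining two cases, I would apply Lemma \ref{lem-principal} to choose a $\gamma_R(G\circ H)$-function $f(V_0,V_1,V_2)$ with $|V_2|$ maximum, so that $A_f \in \mathcal{D}(G)$ and $B_f = \varnothing$. I would then partition $A_f = A_f^t \cup A_f^i$ into vertices having a neighbour in $A_f$ and vertices isolated in $G[A_f]$. Each $x \in A_f^t$ contributes $f(H_x) \geq 2$ since $V(H_x) \cap V_2 \neq \varnothing$. For $x \in A_f^i$, no vertex outside $H_x$ can dominate any vertex of $H_x$, so writing $V_i^x = \{z \in V(H) : f(x,z) = i\}$, we must have $V(H) \setminus V_1^x \subseteq N_H[V_2^x]$. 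A short case split on $|V_2^x|$, combined with $|N_H[v]| \leq \Delta(H) + 1$, then yields $f(H_x) \geq 3$ when $\Delta(H) = \n(H) - 2$ (if $|V_2^x| = 1$ then $|V_1^x| \geq 1$, while $|V_2^x| \geq 2$ already contributes $\geq 4$) and $f(H_x) \geq 4$ when $\Delta(H) \leq \n(H) - 3$.

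In Case 2, summing over $A_f$ gives $\omega(f) \geq 2|A_f^t| + 3|A_f^i|$; the pair $(A_f^t, A_f^i)$ is a dominating couple because $A_f$ dominates $G$ and every $x \in A_f^t$ has a neighbour in $A_f$, so $\omega(f) \geq \zeta(G)$. The matching upper bound uses a $\zeta(G)$-couple $(A,B)$: fixing $v \in V(H)$ of degree $\n(H) - 2$ with unique non-neighbour $w$, I would set $f(x,v) = 2$ for $x \in A \cup B$ and $f(x,w) = 1$ for $x \in B$; the only non-routine check is that a vertex $(y,w)$ with $y \in A$ is dominated, which follows since $y \in A \subseteq V(G) \setminus B$ has a neighbour $y' \in A \cup B$ with $(y',v) \in V_2$. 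For Case 3 the upper bound is Theorem \ref{lem-upper-bound}, and for the lower bound I would select, for each $x \in A_f^i$, a neighbour $\phi(x) \notin A_f$ (which exists because $G$ has no isolated vertex and $x$ is isolated in $G[A_f]$); then $D = A_f \cup \{\phi(x) : x \in A_f^i\}$ is a total dominating set of $G$ with $|D| \leq |A_f^t| + 2|A_f^i|$, giving $\omega(f) \geq 2|A_f^t| + 4|A_f^i| \geq 2\gamma_t(G)$.

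I expect the most delicate step to be the Case 2 lower bound, where the hypothesis $\Delta(H) = \n(H) - 2$ must yield exactly the $+1$ per vertex in $A_f^i$, and where recognising $(A_f^t, A_f^i)$ as a dominating couple (in which vertices of the second coordinate need not themselves be dominated) is what matches the $3|A_f^i|$ coefficient with $\zeta(G)$.
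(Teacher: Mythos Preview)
Your proof is correct and follows essentially the same approach as the paper: both use Lemma \ref{lem-principal} to split $A_f$ into its isolated and non-isolated parts (your $A_f^i, A_f^t$ are the paper's $A_f', A_f\setminus A_f'$), derive the fibre bounds $f(H_x)\ge 3$ or $4$ for isolated $x$ from the degree condition on $H$, recognise $(A_f^t,A_f^i)$ as a dominating couple for Case~2, and augment $A_f$ by one neighbour per isolated vertex to get a total dominating set for Case~3; the upper-bound constructions are identical as well. If anything, your write-up is slightly more explicit than the paper's about why $f(H_x)\ge 3$ (respectively $\ge 4$) when $x\in A_f^i$.
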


\begin{proof}
As shown in \cite{SUmenjak:2012:RDL:2263360.2264103}, if
 $\gamma(H)=1$ and $G$ is a connected nontrivial graph, then $\gamma_R(G\circ H)= 2\gamma(G).$ Obviously, the same equality holds if $G$ is not connected.

\medskip
 In order to discuss the remaining cases, let $f(V_0,V_1,V_2)$ be a $\gamma_R(G\circ H)$-function  such that $|V_2|$ is maximum.  By Lemma \ref{lem-principal}, $A_f=\{x\in V(G):\, V(H_x)\cap V_2\ne \varnothing\}$ is a dominating set of $G$ and $B_f=\{x\in V(G)\setminus A_f:\, V(H_x)\cap V_1\ne \varnothing\}=\varnothing$. Let $A_{f}'=\{x\in A_f: N(x)\cap A_f=\varnothing\}$.

Assume $\Delta(H)=\n(H)-2$. Since $(A_f\setminus A_{f}', A_{f}')$ is a dominating couple of $G$, we deduce that $\zeta(G)\leq 2|A_f\setminus A_{f}'|+3|A_{f}'|=\omega(f)=\gamma_R(G\circ H)$.  Now, let $v\in V(H)$ be a vertex of maximum degree and $\{v'\}=V(H)\setminus N[v]$. Since for any   $\zeta(G)$-couple $(A,B)$, the function $g(W_0,W_1,W_2)$, defined by $W_2=(A\cup B)\times \{v\}$ and $W_1=B\times\{v'\}$,  is an RDF on $G\circ H$, we deduce that $\gamma_R(G\circ H)\leq \omega(g)=|W_1|+2|W_2|=2|A|+3|B|=\zeta(G)$. Therefore, $\gamma_R(G\circ H)=\zeta(G)$.

\vspace{.2cm}
Finally, assume  $\Delta(H)\leq\n(H)-3$. By Theorem \ref{lem-upper-bound} we only need to prove that $\gamma_R(G\circ H)\geq 2\gamma_t(G).$
In this case, if $x\in A_{f}'$, then $f(H_x)\geq 4$, while if $x\in A_f\setminus A_{f}'$, then  $f(H_x)\geq 2$.
Since $G$ does not have isolated vertices and $A_f\in \mathcal{D}(G)$, we have that  $\gamma_t(G)\le |A_f\setminus A_{f}'|+2| A_{f}'|$. Hence,
$2\gamma_t(G)\leq 2|A_f\setminus A_{f}'|+4|A_{f}'|\leq \omega(f)=\gamma_R(G\circ H)$, which completes the proof.
\end{proof}

 Two simple characterizations of Roman graphs were given in
\cite{Cockayne2004}, but the authors suggest finding classes of Roman graphs. The following result is an immediate consequence of Theorems \ref{teo-char-gamma} and \ref{teo-principal-generalizado}.

\begin{theorem}\label{teo-Roman-graph-case-neq-n(H)-2}
Let $G$ be a graph with no isolated vertex. If $H$ is a graph such that $\Delta(H)\ne \n(H)-2$, then $G\circ H$ is a Roman graph.
\end{theorem}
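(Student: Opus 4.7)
The plan is to handle the two surviving cases of the hypothesis $\Delta(H)\neq \n(H)-2$ separately, and in each case to read off $\gamma(G\circ H)$ from Theorem \ref{teo-char-gamma} and $\gamma_R(G\circ H)$ from Theorem \ref{teo-principal-generalizado}, then simply check that the ratio is $2$. So there is no new combinatorics to do; the work is merely to align the hypotheses of the two theorems.

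First I would observe the elementary dictionary between maximum degree and domination number of $H$: $H$ has a universal vertex if and only if $\Delta(H)=\n(H)-1$, which is equivalent to $\gamma(H)=1$. Consequently, $\gamma(H)\geq 2$ if and only if $\Delta(H)\leq \n(H)-2$. Under our assumption $\Delta(H)\neq \n(H)-2$, this means either $\Delta(H)=\n(H)-1$ (so $\gamma(H)=1$) or $\Delta(H)\leq \n(H)-3$ (which forces $\gamma(H)\geq 2$).

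In the first case, Theorem \ref{teo-char-gamma} yields $\gamma(G\circ H)=\gamma(G)$ and Theorem \ref{teo-principal-generalizado} yields $\gamma_R(G\circ H)=2\gamma(G)$, so $\gamma_R(G\circ H)=2\gamma(G\circ H)$. In the second case, Theorem \ref{teo-char-gamma} yields $\gamma(G\circ H)=\gamma_t(G)$ and Theorem \ref{teo-principal-generalizado} yields $\gamma_R(G\circ H)=2\gamma_t(G)$, so again $\gamma_R(G\circ H)=2\gamma(G\circ H)$. Either way $G\circ H$ is a Roman graph, as required.

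There is no real obstacle here; the only subtlety to flag is that the cited Theorem \ref{teo-char-gamma} requires $H$ to be nontrivial, which is automatic since a trivial $H$ would have $\n(H)-2=-1$ and the hypothesis $\Delta(H)\neq \n(H)-2$ would be vacuously satisfied but the product reduces to $G$ itself; it is therefore worth restating the standing assumption that $H$ is nontrivial so that Theorems \ref{teo-char-gamma} and \ref{teo-principal-generalizado} apply verbatim.
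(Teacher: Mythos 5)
Your proof is correct and takes essentially the same route as the paper, which presents this theorem as an immediate consequence of Theorems \ref{teo-char-gamma} and \ref{teo-principal-generalizado}: one matches $\Delta(H)=\n(H)-1$ with $\gamma(H)=1$ and $\Delta(H)\le \n(H)-3$ with $\gamma(H)\ge 2$, exactly as you do, and reads off that the ratio is $2$ in both cases. Your side remark about nontriviality of $H$ is apt (it is not forced by the hypothesis, since a trivial $H$ satisfies $\Delta(H)\ne \n(H)-2$ while $G\circ H\cong G$ need not be Roman), and it is consistent with the standing nontriviality assumptions in the cited theorems.
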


As $\zeta(G)$ has not been extensively studied, we next obtain tight bounds  on $\gamma_R(G\circ H)$ for  the case in which $\Delta(H)=\n(H)-2$.

\begin{theorem}\label{teo-bound-case=3}
Let $G$ a graph with no isolated vertex and $H$ a graph. If  $\Delta(H)=\n(H)-2$, then  $$\max\{\gamma_{tR}(G),\gamma_t(G)+\gamma(G)\}\leq \gamma_R(G\circ H) \leq \min\{3\gamma(G),2\gamma_t(G)\}.$$
\end{theorem}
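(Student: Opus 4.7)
The two upper bounds come for free. The bound $\gamma_R(G\circ H)\le 2\gamma_t(G)$ is Theorem \ref{lem-upper-bound}. For the bound $\gamma_R(G\circ H)\le 3\gamma(G)$, take a $\gamma(G)$-set $D$ and observe that $(\varnothing,D)$ is a dominating couple of $G$, since every vertex in $V(G)\setminus D$ has a neighbour in $D$. Hence $\zeta(G)\le 3|D|=3\gamma(G)$, and Theorem \ref{teo-principal-generalizado} yields $\gamma_R(G\circ H)=\zeta(G)\le 3\gamma(G)$.

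For the two lower bounds, the plan is to exploit Lemma \ref{lem-principal} together with a per-copy weight analysis that uses the hypothesis $\Delta(H)=\n(H)-2$. Pick a $\gamma_R(G\circ H)$-function $f(V_0,V_1,V_2)$ with $|V_2|$ maximum. By Lemma \ref{lem-principal}, $A_f=\{x\in V(G):V(H_x)\cap V_2\ne\varnothing\}\in\mathcal{D}(G)$ and $B_f=\varnothing$. Let $A_f'=\{x\in A_f: N_G(x)\cap A_f=\varnothing\}$. For any $x\in A_f\setminus A_f'$, obviously $f(H_x)\ge 2$. For $x\in A_f'$, since $B_f=\varnothing$ and no neighbour of $x$ in $G$ contains a vertex of $V_2$, every vertex of $V(H_x)\cap V_0$ must be dominated from within $V(H_x)\cap V_2$. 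If $|V(H_x)\cap V_2|\ge 2$ then $f(H_x)\ge 4$; otherwise $V(H_x)\cap V_2=\{(x,v)\}$ with $\deg_H(v)\le\n(H)-2$, so some $v'\in V(H)\setminus N_H[v]$ cannot lie in $V_0$, forcing $f(x,v')\ge 1$ and hence $f(H_x)\ge 3$. Therefore
$$\gamma_R(G\circ H)=\omega(f)\ge 2|A_f\setminus A_f'|+3|A_f'|.$$

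The bound $\gamma_R(G\circ H)\ge \gamma_t(G)+\gamma(G)$ now follows by combining the inequality $|A_f|\ge\gamma(G)$ (since $A_f\in\mathcal{D}(G)$) with the inequality $\gamma_t(G)\le |A_f\setminus A_f'|+2|A_f'|$ already proved in Theorem \ref{teo-principal-generalizado}: adding them gives exactly $2|A_f\setminus A_f'|+3|A_f'|\ge\gamma(G)+\gamma_t(G)$.

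For the bound $\gamma_R(G\circ H)\ge\gamma_{tR}(G)$, I would construct an explicit total Roman dominating function $g$ on $G$. For every $x\in A_f'$ choose a neighbour $y_x\in N_G(x)$ (which exists since $\delta(G)\ge 1$), and set $W_2=A_f$, $W_1=\{y_x:x\in A_f'\}\setminus A_f$. Every vertex of $W_2\setminus A_f'$ has a neighbour in $A_f=W_2$, every vertex of $A_f'$ has $y_x\in W_1\cup W_2$ as a neighbour, and every $y_x\in W_1$ has $x\in W_2$ as a neighbour; together with $A_f\in\mathcal{D}(G)$ this shows that $W_1\cup W_2\in\mathcal{D}_t(G)$, so $g$ is a TRDF. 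Its weight is at most $2|A_f|+|A_f'|=2|A_f\setminus A_f'|+3|A_f'|\le\omega(f)$, giving $\gamma_{tR}(G)\le\gamma_R(G\circ H)$. The only delicate step is the per-copy lower bound $f(H_x)\ge 3$ for $x\in A_f'$, which is where the hypothesis $\Delta(H)=\n(H)-2$ is essential; the remaining arguments are straightforward bookkeeping.
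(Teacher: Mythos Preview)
Your argument is correct and follows essentially the same route as the paper. The only cosmetic differences are that for the upper bound $3\gamma(G)$ you pass through the dominating couple $(\varnothing,D)$ and Theorem~\ref{teo-principal-generalizado}, whereas the paper exhibits the explicit RDF $V_2'=D\times\{v\}$, $V_1'=D\times\{v'\}$ directly; and that you invoke the inequality $\gamma_t(G)\le |A_f\setminus A_f'|+2|A_f'|$ by reference to the proof of Theorem~\ref{teo-principal-generalizado} rather than rederiving it via the auxiliary set $S$ as the paper does here. Your per-copy justification that $f(H_x)\ge 3$ for $x\in A_f'$ is actually more explicit than the paper's, and your TRDF construction with $W_1=\{y_x:x\in A_f'\}\setminus A_f$ is the same idea as the paper's $W_1=S\setminus A_f$.
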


\begin{proof}
Let  $f(V_0,V_1,V_2)$ be a $\gamma_R(G\circ H)$-function with $|V_2|$ maximum.  As above, let $A_f=\{x\in V(G):\, V(H_x)\cap V_2\ne \varnothing\}$, $B_f=\{x\in V(G)\setminus A_f:\, V(H_x)\cap V_1\ne \varnothing\}$ and  $A_{f}'=\{x\in A_f: N(x)\cap A_f=\varnothing\}$.
By Lemma \ref{lem-principal}, $B_f=\varnothing$  and  $A_f\in \mathcal{D}(G)$. Furthermore,  if $x\in A_{f}'$, then $f(H_x)=3$, while if $x\in A_f\setminus A_{f}'$, then $f(H_x)=2$. Thus, $$\gamma_R(G\circ H)=3|A_{f}'|+2|A_f\setminus A_{f}'|.$$
We first prove the lower bounds. Let $S\subseteq V(G)$ be a set of minimum cardinality among the sets satisfying  that $A_f\subseteq S$ and $S\cap N(x) \neq \varnothing $  for every vertex $x\in A_{f}'$. Since  $S\in \mathcal{D}_t(G)$, we deduce that  $\gamma_t(G)\le |S|\le 2|A_{f}'|+|A_f\setminus A_{f}'|$. Hence, $\gamma_t(G)+\gamma(G)\leq (2|A_{f}'|+|A_f\setminus A_{f}'|)+ |A_f|= 3|A_{f}'|+2|A_f\setminus A_{f}'|=\gamma_R(G\circ H)$.

\medskip
Now, let $g(W_0,W_1,W_2)$ be a function on $G$ defined by $W_2=A_f$ and $W_1=S\setminus A_f$. Notice that $g$ is a TRDF on $G$. Thus, $\gamma_{tR}(G)\leq \omega(g)=2|A_f|+|S\setminus A_f|\le 3|A_{f}'|+2|A_f\setminus A_{f}'|=\gamma_R(G\circ H)$, which completes the proof of the lower bounds.

In order to prove the upper bounds, let $D$ be a $\gamma(G)$-set, and let $v,v'\in V(H)$ such that $v$ is a  vertex of maximum degree  and   $\{v'\}=V(H)\setminus N[v]$. Notice that the function $f'(V_0',V_1',V_2')$, defined by $V_2'=D\times \{v\}$ and $V_1'=D\times \{v'\}$, is an RDF on $G\circ H$. Therefore, $\gamma_R(G\circ H)\leq \omega(f')=3|D|=3\gamma(G)$.

Finally, the bound $\gamma_R(G\circ H)\le 2\gamma_t(G)$ is already known from Theorem \ref{lem-upper-bound}.   Therefore, the proof is complete.
\end{proof}

The bounds above are tight. Notice that, if $\gamma_t(G)=\gamma(G)$, then $\gamma_R(G\circ H)=\gamma_{tR}(G)=2\gamma_t(G),$ while if $\gamma_t(G)=2\gamma(G)$, then we have $\gamma_R(G\circ H)=\gamma_{t}(G)+\gamma(G)=3\gamma(G).$

\section{Perfect Roman domination in lexicographic product graphs} \label{Section-Perfect Roman Domination}

This section is organised as follows. First we obtain tight bounds on $\gamma_R^p(G\circ H)$  and then we give sufficient and/or necessary conditions for the bounds to be achieved.
 We also discuss the case of perfect Roman graphs and we characterize the graphs where $\gamma_R^p(G\circ H)=\gamma_R(G\circ H)$.
 %Finally, we obtain the value of $\gamma_R^p(G\circ H)$ for some specific classes of graphs.

\begin{theorem}
For any graph $G$ without isolated vertices and any graph $H$,
$$\gamma_R^p(G\circ H)\le \gamma^p(G)(\n(H)+1).$$
\end{theorem}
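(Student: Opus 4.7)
The plan is to exhibit an explicit perfect Roman dominating function on $G\circ H$ whose weight is $\gamma^p(G)(\n(H)+1)$, built on top of any fixed $\gamma^p(G)$-set.

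Let $D$ be a $\gamma^p(G)$-set, and for each $x\in D$ pick an arbitrary vertex $v_x\in V(H)$. Define $f(V_0,V_1,V_2)$ on $G\circ H$ by setting $V_2=\{(x,v_x):x\in D\}$, $V_1=\{(x,v):x\in D,\ v\in V(H)\setminus\{v_x\}\}$ and $V_0=(V(G)\setminus D)\times V(H)$. Counting weights, we immediately obtain
$$
\omega(f)=2|D|+(\n(H)-1)|D|=|D|(\n(H)+1)=\gamma^p(G)(\n(H)+1),
$$
so the only remaining task is to verify that $f$ is a PRDF.

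The weight-zero vertices are precisely the pairs $(u,v)$ with $u\in V(G)\setminus D$. Because $D\in \mathcal{D}^p(G)$, such a vertex $u$ has a unique neighbour $x\in D$ in $G$. In $G\circ H$, the vertex $(u,v)$ is adjacent to $(x,v_x)\in V_2$ since $u\sim x$ in $G$. Conversely, any other element of $V_2$ is of the form $(x',v_{x'})$ with $x'\in D$, $x'\neq x$; adjacency of $(u,v)$ to $(x',v_{x'})$ in $G\circ H$ would force $u\sim x'$ in $G$ or $u=x'$, contradicting either the uniqueness of $x$ or $u\notin D$. Hence $(u,v)$ is adjacent to exactly one vertex of $V_2$, so $f$ is a PRDF, giving $\gamma_R^p(G\circ H)\le \omega(f)=\gamma^p(G)(\n(H)+1)$.

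There is no real obstacle here: the argument is a direct translation of the defining property of a perfect dominating set of $G$ into the lexicographic product, where the fibre structure of $G\circ H$ ensures that neighbours across fibres are governed entirely by $G$-adjacency. The mild subtlety is making sure \emph{exact} (not merely at least one) domination by $V_2$ is preserved, which is exactly what the perfectness of $D$ buys us.
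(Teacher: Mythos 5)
Your proposal is correct and follows essentially the same construction as the paper: assign weight $2$ to one vertex of $H_x$ for each $x$ in a $\gamma^p(G)$-set, weight $1$ to the rest of those fibres, and $0$ elsewhere (the paper fixes a single vertex $v\in V(H)$ for all fibres, while you allow a choice $v_x$ per fibre, which changes nothing). The only difference is that you spell out the verification that the resulting function is a PRDF, which the paper dismisses as clear.
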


\begin{proof}
Let $S$ be a $\gamma^p(G)$-set and $v\in V(H)$. Let  $f(V_0,V_1,  V_2)$ be a function on $G\circ H$ defined by $V_2=S\times \{v\} $ and $V_1=S\times (V(H)\setminus \{v\})$. Clearly,  $f$ is a PRDF, which implies that $\gamma_R^p(G\circ H)\le \omega (f) =2|S|+|S|(\n(H)-1)=\gamma^p(G)(\n(H)+1)$. Therefore, the result follows.
\end{proof}

In order to see that the bound above is tight, we can consider the corona graph $G\cong G'\odot N_k$, where $k\ge 2$, $G'$ is any graph of minimum degree at least two, and $H$ is a nontrivial graph. In this case, $\gamma_R^p(G\circ H)= \n(G')(\n(H)+1)=\gamma^p(G)(\n(H)+1).$

\begin{theorem}
Let  $G$ be a graph without isolated vertices and $H$  a graph.
The following statements hold.
\begin{enumerate}[{\rm (i)}]
\itemsep=0.9pt
\item For any $\gamma_R^p(G)$-function $f(V_0,V_1,  V_2)$,\vspace{-1mm}
$$\gamma_R^p(G\circ H)\le \gamma^p_R(G)+(|V_1|+|V_2|)(\n(H)-1).$$
\item If there exists a $\gamma_R^p(G)$-function $f(V_0,V_1,  V_2)$ such that $V_2$ is a $\gamma(G)$-set, then\vspace{-1mm}
$$\gamma_R^p(G\circ H)\le \gamma^p_R(G)\n(H)-\gamma(G)(\n(H)-1).$$
\item If  $S$ is a $\gamma^p(G)$-set, $S'=\{x\in S:\, \epn(x,S)=\varnothing\}$ and $S''=S\setminus S'$, then\vspace{-1mm}
$$\gamma_R^p(G\circ H)\le |S'|+2|S''|+ \gamma^p(G)(\n(H)-1).$$
\item If there exists a $\gamma_R^p(G)$-function $f(V_0,V_1,  V_2)$ such that $V_1\cup V_2$ is a $\gamma^p(G)$-set, then\vspace{-1mm}
$$\gamma_R^p(G\circ H)\le \gamma^p_R(G)+\gamma^p(G)(\n(H)-1).$$
\end{enumerate}
\end{theorem}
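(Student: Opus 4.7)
The plan is to prove each of the four bounds by exhibiting an explicit PRDF on $G\circ H$ realising the claimed weight and then checking that every label-$0$ vertex has exactly one label-$2$ neighbour. In each construction I fix an arbitrary vertex $v\in V(H)$, lift the given structure on $G$ to the slice $V(G)\times\{v\}$, and fill the remaining slices with a blanket of $1$'s on the support of the original labelling (and $0$'s elsewhere). Parts (ii) and (iv) turn out not to need separate constructions: they follow from (i) by algebraic substitution, using $|V_1|+|V_2|=\gamma_R^p(G)-\gamma(G)$ in the case of (ii) and $|V_1|+|V_2|=\gamma^p(G)$ in the case of (iv).

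For (i), starting from a $\gamma_R^p(G)$-function $f(V_0,V_1,V_2)$, I would set $g(x,v)=f(x)$ for every $x\in V(G)$, assign $g(x,z)=1$ whenever $x\in V_1\cup V_2$ and $z\neq v$, and $g(x,z)=0$ otherwise. The weight is $\omega(f)+(\n(H)-1)|V_1\cup V_2|$, which matches the stated bound. The label-$0$ vertices of $g$ are exactly $V_0\times V(H)$; since $f$ is a PRDF each $x\in V_0$ has a unique neighbour $x'\in V_2$, and because the label-$2$ vertices of $g$ are precisely $V_2\times\{v\}$, the vertex $(x',v)$ is the unique label-$2$ neighbour of $(x,z)$ in $G\circ H$.

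For (iii), from the $\gamma^p(G)$-set $S$ I would define $g(x,v)=2$ for $x\in S''$, $g(x,v)=1$ for $x\in S'$, $g(x,z)=1$ for $x\in S$ and $z\neq v$, and $g(x,z)=0$ for $x\notin S$; a direct count gives $\omega(g)=2|S''|+|S'|+\gamma^p(G)(\n(H)-1)$. The main obstacle is the verification here: I must argue that this weaker labelling, with label $2$ concentrated only on $S''\times\{v\}$ rather than on all of $S\times\{v\}$, still perfectly Roman-dominates every $(x,z)$ with $x\notin S$. The key observation is that, because $S\in\mathcal{D}^p(G)$, each $x\notin S$ has a unique $S$-neighbour $x'$, and the relation $x\in\epn(x',S)$ forces $\epn(x',S)\neq\varnothing$, hence $x'\in S''$; therefore $(x',v)$ is the unique label-$2$ neighbour of $(x,z)$ and $g$ is a PRDF.
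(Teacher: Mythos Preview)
Your proof is correct and follows essentially the same approach as the paper: the paper also lifts $f$ to $G\circ H$ by setting $W_2=V_2\times\{v\}$ and $W_1=V_2\times(V(H)\setminus\{v\})\cup V_1\times V(H)$ for (i), derives (ii) and (iv) from (i) by the same algebraic substitutions, and handles (iii) by observing that $h(V(G)\setminus S,\,S',\,S'')$ is a PRDF on $G$ (your key observation that the unique $S$-neighbour of any $x\notin S$ must lie in $S''$) and then applying the construction of (i) to $h$. Your write-up is somewhat more explicit in verifying the PRDF condition, but the constructions and reductions are identical.
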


\begin{proof}
From any $\gamma_R^p(G)$-function $f(V_0,V_1,  V_2)$, we can define a function  $g(W_0,W_1,  W_2)$  on $G\circ H$ as $W_2=V_2\times \{v\} $ and $W_1=V_2\times (V(H)\setminus \{v\})\cup V_1\times V(H)$. It is readily seen that $g$ is a PRDF and, as a result,  $\gamma_R^p(G\circ H)\le \omega (g) =2|V_2|+|V_2|(\n(H)-1)+|V_1|\n(H)=\gamma^p_R(G)+(|V_1|+|V_2|)(\n(H)-1)$. Therefore, (i) follows.

Now, since $\gamma^p_R(G)+(|V_1|+|V_2|)(\n(H)-1)=\gamma^p_R(G)\n(H)-|V_2|(\n(H)-1),$ from (i) we deduce  (ii).

\medskip
In order to prove (iii), we only need to observe that
for any $\gamma^p(G)$-set $S$,
the function $h(V(G)\setminus S, S', S'')$ is a PRDF on $G$. Thus, we conclude the proof of (iii) by analogy to the proof of (i), by using $h$ instead of $f$.

\medskip
Finally, (iv) follows from (i).
\end{proof}

The bounds above are tight. For instance, let $G$ be the graph shown  in Figure \ref{Fig-Example-PerfectDomination},
$V_2=S''$ the set of vertices labelled with $2$,   $V_1=S'$ the set of vertices labelled with $1$ and $V_0=V(G)\setminus (V_1\cup V_2)$. In this case, $V_2$ is a $\gamma(G)$-set, $f(V_0,V_1,V_2)$ is a $\gamma^p_R(G)$-function, $S=S'\cup S''$ is a $\gamma^p(G)$-set and $\gamma^p_R(G\circ H)=6\n(H)+3$ for every  graph $H$.
Therefore, the bounds above are achieved.

\begin{theorem}\label{teo-UpperBoundGeneral}
For any graph $G$ without isolated vertices and any graph $H$,
$$\gamma_R^p(G\circ H)\le \displaystyle\min_{S\in \wp_o(G)}\{ |S_0|(\n(H)-\Delta(H)+1) + |S_1|(2+\delta(H))+\n(H)(\n(G)-|N[S]|)\}.$$
\end{theorem}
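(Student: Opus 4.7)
The plan is to fix an arbitrary open packing $S \in \wp_o(G)$ and to construct an explicit PRDF $f$ on $G\circ H$ whose weight is exactly $|S_0|(\n(H)-\Delta(H)+1) + |S_1|(2+\delta(H))+\n(H)(\n(G)-|N[S]|)$; minimising over $S$ will then yield the stated inequality. Before defining $f$, I would first record two structural features of $S$. As already noted in Section~\ref{section-tools}, writing $S=S_0\cup S_1$ with $S_0$ the set of isolated vertices of $G[S]$, each vertex of $S_1$ has a unique neighbour in $S$, and that neighbour lies in $S_1$ (so the edges of $G[S]$ form a perfect matching on $S_1$). Moreover, the open-packing condition $N(u)\cap N(u')=\varnothing$ for distinct $u,u'\in S$ forces every vertex $x\in V(G)\setminus S$ to have at most one neighbour in $S$.

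Next, I would fix vertices $v_H,v_h\in V(H)$ with $\deg_H(v_H)=\Delta(H)$ and $\deg_H(v_h)=\delta(H)$, and define $f$ copy by copy on the subgraphs $H_u$. On $H_u$ for each $u\in S_0$, set $f(u,v_H)=2$, $f(u,y)=0$ for $y\in N_H(v_H)$, and $f(u,y)=1$ for the remaining $\n(H)-1-\Delta(H)$ vertices, contributing $\n(H)-\Delta(H)+1$. On $H_u$ for each $u\in S_1$, set $f(u,v_h)=2$, $f(u,y)=1$ for $y\in N_H(v_h)$, and $f(u,y)=0$ elsewhere, contributing $2+\delta(H)$. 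On $H_x$ for each $x\in N(S)\setminus S$, set $f\equiv 0$. On $H_x$ for each $x\in V(G)\setminus N[S]$, set $f\equiv 1$, contributing $\n(H)$. Summing over all copies gives precisely the claimed weight.

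The main work is to verify that $f$ is a PRDF, i.e.\ that every weight-$0$ vertex has a unique weight-$2$ neighbour. The weight-$0$ vertices split into three types: (i) $(u,y)$ with $u\in S_0$ and $y\in N_H(v_H)$; (ii) $(u,y)$ with $u\in S_1$ and $y\notin N_H[v_h]$; and (iii) $(x,y)$ with $x\in N(S)\setminus S$. In case~(i) the vertex $(u,v_H)$ works, and uniqueness is immediate because $u\in S_0$ has no $S$-neighbour, so no weight-$2$ vertex lies in an adjacent copy. In case~(iii) the unique $u\in N_G(x)\cap S$ provided by the open-packing property supplies the unique weight-$2$ neighbour, namely $(u,v_H)$ or $(u,v_h)$ according as $u\in S_0$ or $u\in S_1$. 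The delicate case is~(ii): the candidate weight-$2$ neighbour is $(u',v_h)$, where $u'$ is $u$'s unique $S$-partner, which must lie in $S_1$. Uniqueness here rests simultaneously on both design choices: the hypothesis $y\notin N_H[v_h]$ ensures that $(u,v_h)$ itself is \emph{not} a neighbour of $(u,y)$, and the defining property of $u\in S_1$ ensures that no further $S$-neighbour of $u$ contributes a second weight-$2$ vertex. This is the step I expect to be the main obstacle, since it is where the choices of $v_h$ (of minimum degree), the matched structure of $S_1$, and the open-packing hypothesis are all used at once; once it is checked, the inequality follows by minimising over $S\in\wp_o(G)$.
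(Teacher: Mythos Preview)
Your construction is exactly the one given in the paper (with $v_H=y_2$ and $v_h=y_1$ in the paper's notation), and your case analysis correctly verifies that the resulting function is a PRDF, which the paper simply asserts as ``readily seen.'' The approach and the resulting bound are identical.
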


\begin{proof}
Let $S=S_0\cup S_1\in \wp_o(G)$ and $y_1, y_2\in V(H)$ such that $\deg(y_1)=\delta(H)$ and $\deg(y_2)=\Delta(H)$. From $S$, $y_1$ and $y_2$, we can construct a function $f(V_0,V_1,V_2)$ on $G\circ H$ as follows. Let $V_2=S_0\times \{y_2\}\cup S_1\times \{y_1\}$ and $V_1=S_0\times (V(H)\setminus N[y_2])\cup S_1\times N(y_1)\cup (V(G)\setminus N[S]) \times V(H)$. It is readily seen that $f$ is a PRDF on $G\circ H$. Therefore, $\gamma_R^p(G\circ H)\le \omega(f)= |S_0|(\n(H)-\Delta(H)+1) + |S_1|(2+\delta(H))+\n(H)(\n(G)-|N[S]|).$ Since the inequality holds for any open packing of $G$, the result follows.
\end{proof}

The following result is an immediate consequence of Theorem \ref{teo-UpperBoundGeneral}.

\begin{corollary}\label{teo-consequences}
Given a graph $G$ without isolated vertices, the following statements hold.
\begin{enumerate}[{\rm (i)}]
\itemsep=0.9pt
\item If $G$ is an efficient open domination graph, then for any graph $H$,\vspace{-1mm}
$$\gamma_R^p(G\circ H)\le \gamma_t(G)(2+\delta(H)).$$
\item If  $G$ is an efficient closed domination graph, then for any graph $H$,\vspace{-1mm}
$$\gamma_R^p(G\circ H)\le  \gamma(G)(\n(H)-\Delta(H)+1).$$
\end{enumerate}
\end{corollary}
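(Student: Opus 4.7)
The plan is to derive both inequalities by instantiating the general bound of Theorem \ref{teo-UpperBoundGeneral} at a carefully chosen open packing $S = S_0 \cup S_1 \in \wp_o(G)$ and checking that all but one of the three terms drop out.

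For (i), I would take $S$ to be an efficient open dominating set $D$ of $G$, which exists by hypothesis. By Remark \ref{efficient open domination graph} we have $G[D] \cong \cup K_2$, which means every vertex of $D$ has degree exactly one in the induced subgraph, so in the decomposition from Section \ref{section-tools} we obtain $S_0 = \varnothing$ and $S_1 = D$. Moreover, the defining condition $V(G) = \bigcup_{u \in D} N(u)$ gives $N[D] = V(G)$, so $\n(G) - |N[D]| = 0$. Since also $|D| = \gamma_t(G)$ (again by Remark \ref{efficient open domination graph}), substitution into Theorem \ref{teo-UpperBoundGeneral} yields
\[
\gamma_R^p(G \circ H) \le 0\cdot(\n(H) - \Delta(H) + 1) + \gamma_t(G)(2 + \delta(H)) + \n(H)\cdot 0 = \gamma_t(G)(2+\delta(H)),
\]
which is the claimed bound.

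For (ii), I would take $S$ to be a $\gamma(G)$-set that is simultaneously a packing; such a set exists because $G$ is an efficient closed domination graph, i.e., $\mathcal{D}(G) \cap \wp(G) \ne \varnothing$. Since $S \in \wp(G) \subseteq \wp_o(G)$, the choice is legal in Theorem \ref{teo-UpperBoundGeneral}; moreover, because $S$ is even a (closed) packing, $N[u] \cap N[v] = \varnothing$ for distinct $u, v \in S$ forces $G[S]$ to have no edges, so $S_0 = S$ and $S_1 = \varnothing$. Since $S$ is dominating, $N[S] = V(G)$, and $|S| = \gamma(G)$. Plugging into Theorem \ref{teo-UpperBoundGeneral} gives
\[
\gamma_R^p(G \circ H) \le \gamma(G)(\n(H) - \Delta(H) + 1) + 0\cdot(2 + \delta(H)) + \n(H) \cdot 0,
\]
which is the desired bound.

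There is no genuine obstacle here; the whole argument consists of recognizing that the two hypotheses on $G$ produce an open packing with, respectively, $S_0 = \varnothing$ or $S_1 = \varnothing$, and with $N[S] = V(G)$, so that two of the three summands in the bound of Theorem \ref{teo-UpperBoundGeneral} vanish. The only care needed is to invoke Remark \ref{efficient open domination graph} in part (i) to convert the efficient open dominating set into the open packing structure and to equate its cardinality with $\gamma_t(G)$.
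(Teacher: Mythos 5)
Your proposal is correct and follows essentially the same route as the paper: both proofs plug the appropriate open packing into Theorem \ref{teo-UpperBoundGeneral} — the efficient open dominating set with $S_0=\varnothing$, $|S|=\gamma_t(G)$ for (i), and a dominating packing with $S_1=\varnothing$, $|S|=\gamma(G)$ for (ii) — and observe that $N[S]=V(G)$ kills the remaining term. No substantive difference from the paper's argument.
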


\begin{proof}
First, we proceed to prove (i). Let $S\in \mathcal{D}^p(G)$ such that $G[S] \cong \cup K_2$. Notice that $S =S_1\in \wp_o(G)$ and $N[S]=V(G)$. Hence, by Theorem \ref{teo-UpperBoundGeneral} and Remark \ref{efficient open domination graph} we deduce that  $\gamma_R^p(G\circ H)\le |S|(2+\delta(H))=\gamma_t(G)(2+\delta(H)).$

\medskip
Finally, we proceed to prove (ii). Let $S$ be a $\gamma(G)$-set which is a $\rho (G)$-set. Since $S=S_0\in \wp_o(G)$ and $N[S]=V(G)$, by Theorem \ref{teo-UpperBoundGeneral} we deduce that $\gamma_R^p(G\circ H)\le |S|(\n(H)-\Delta(H)+1)= \rho(G)(\n(H)-\Delta(H)+1).$
\end{proof}

As we will show in Theorems  \ref{teo-gamma=1} and \ref{Foirst-Consequence-Bounds}, the bounds above are tight.

\begin{theorem}\label{teo-gamma=1}
Given a  nontrivial graph  $G$ with $\gamma(G)=1$, the following statements hold.
\begin{enumerate}[{\rm (i)}]
\itemsep=0.9pt
\item If $\delta(G)\ge 2$ , then  for any graph $H$,\vspace{-1mm}
$$\gamma_R^p(G\circ H)= \n(H)-\Delta(H)+1.$$
\item If $\delta(G)=1$, then  for any graph $H$,\vspace{-1mm}
$$\gamma_R^p(G\circ H)=\min\{2\delta(H)+4,\n(H)-\Delta(H)+1\}.$$
\end{enumerate}
\end{theorem}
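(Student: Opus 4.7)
The plan has two halves: upper bound constructions, and a lower bound obtained from a case analysis driven by the rows of $G\circ H$ that contain a $2$-vertex. Since $\gamma(G)=1$, $G$ has a universal vertex $u$, and $\{u\}$ is at once a $\gamma(G)$-set and a $\rho(G)$-set (any two distinct vertices of $G$ share $u$ as a common neighbour), so $G$ is an efficient closed domination graph. Corollary~\ref{teo-consequences}(ii) therefore delivers the general upper bound
\[
\gamma_R^p(G\circ H)\le \n(H)-\Delta(H)+1,
\]
which settles (i) and one candidate in (ii). For the second candidate in (ii), the hypothesis $\delta(G)=1$ guarantees a vertex $v\in V(G)$ with $N_G(v)=\{u\}$. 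Fixing $y^{*}\in V(H)$ with $\deg_H(y^{*})=\delta(H)$ and setting $V_2=\{(u,y^{*}),(v,y^{*})\}$ and $V_1=\{u,v\}\times N_H(y^{*})$, one checks that each $(x,y)\in V_0$ has exactly one $2$-neighbour: $(u,y^{*})$ when $x\notin\{u,v\}$ (the essential use of $N_G(v)=\{u\}$, preventing adjacency to $(v,y^{*})$), $(v,y^{*})$ when $x=u$, and $(u,y^{*})$ when $x=v$. The resulting PRDF has weight $2\delta(H)+4$.

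For the lower bound, let $f(V_0,V_1,V_2)$ be a $\gamma_R^p(G\circ H)$-function and let $\mathcal{A}=\{x\in V(G):V(H_x)\cap V_2\ne\varnothing\}$. I would split according to $|\mathcal{A}|$. When $|\mathcal{A}|=0$, $V_0=\varnothing$ gives $\omega(f)\ge \n(G)\n(H)$. When $\mathcal{A}=\{x_0\}$: if $|V_2|\ge 2$, every $x\in N_G(x_0)\setminus\{x_0\}$ (which exists because $G$ is connected) sees all of $V_2\subseteq V(H_{x_0})$, forcing $f(H_x)\ge\n(H)$ and $\omega(f)\ge\n(H)+4$; if $V_2=\{(x_0,y^{*})\}$, the $0$-vertices of $H_{x_0}$ lie in $N_H(y^{*})$, so $f(H_{x_0})\ge\n(H)-\deg_H(y^{*})+1\ge\n(H)-\Delta(H)+1$, with each row indexed in $V(G)\setminus N_G[x_0]$ contributing a further $\n(H)$. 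Thus when $|\mathcal{A}|\le 1$ one always has $\omega(f)\ge\n(H)-\Delta(H)+1$.

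The decisive case is $|\mathcal{A}|\ge 2$, where the split between (i) and (ii) takes place. Since $u$ is universal, every $0$-vertex of $H_u$ has at least $|V_2\setminus V(H_u)|$ neighbours in $V_2$, so $|V_2\setminus V(H_u)|\ge 2$ forces $f(H_u)\ge\n(H)$ and $\omega(f)\ge\n(H)+4$. Otherwise $u\in\mathcal{A}$ and $V_2\setminus V(H_u)=\{(x_2,y^{*})\}$; running perfectness through a $0$-vertex of $H_{x_2}$ pins $|V_2\cap V(H_u)|=1$, say $V_2\cap V(H_u)=\{(u,y_u)\}$, and then the $0$-vertices of $H_u$ (resp.\ $H_{x_2}$) must avoid $N_H[y_u]$ (resp.\ $N_H[y^{*}]$), giving $f(H_u)\ge 2+\deg_H(y_u)\ge 2+\delta(H)$ and $f(H_{x_2})\ge 2+\deg_H(y^{*})\ge 2+\delta(H)$. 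Moreover every $x\in N_G(x_2)\setminus\{u,x_2\}$ sees both $(u,y_u)$ and $(x_2,y^{*})$, so $f(H_x)\ge\n(H)$. Hence $\deg_G(x_2)\ge 2$ contributes an additional $\n(H)$ and forces $\omega(f)\ge\n(H)-\Delta(H)+1$, whereas $\deg_G(x_2)=1$---possible only when $\delta(G)=1$---gives $\omega(f)\ge 2\delta(H)+4$ exactly. In case (i) the pendant option is excluded, so $\omega(f)\ge\n(H)-\Delta(H)+1$ throughout; in case (ii) all subcases give $\omega(f)\ge\min\{2\delta(H)+4,\n(H)-\Delta(H)+1\}$. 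The main obstacle is the bookkeeping in this last subcase, where perfectness must be exploited twice, in two different rows, and combined with the pendant/non-pendant dichotomy of $x_2$ to read off the correct matching bound in each part of the theorem.
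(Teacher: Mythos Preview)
Your argument is correct and follows the same strategy as the paper's proof: the upper bounds come from Corollary~\ref{teo-consequences} (your explicit pendant construction is exactly the one behind Corollary~\ref{teo-consequences}(i), since $\{u,v\}$ is an efficient open dominating set of $G$), and the lower bound is a case analysis on how $V_2$ is distributed over the rows $H_x$, in each case either producing a full row of weight at least $\n(H)$ or isolating the configuration with exactly one $2$-vertex in each of $H_u$ and $H_{x_2}$. Your case split (on $|\mathcal{A}|$, then on $|V_2\setminus V(H_u)|$, then on $\deg_G(x_2)$) is organised a little differently from the paper's---which treats (i) and (ii) separately and, for (i), uses that any two vertices of $G$ share a common neighbour when $\delta(G)\ge 2$---but the mechanism is identical.

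One small point to tidy up: when you ``run perfectness through a $0$-vertex of $H_{x_2}$'' you tacitly assume such a vertex exists. If $V(H_{x_2})\cap V_0=\varnothing$, then $f(H_{x_2})\ge \n(H)+1$, and together with $f(H_u)\ge 2$ this already yields $\omega(f)\ge \n(H)+3\ge \n(H)-\Delta(H)+1$, so the omission is harmless.
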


\begin{proof}
Let $f(V_0,V_1,V_2)$ be a $\gamma_R^p(G\circ H)$-function.
We assume first that $\delta(G)\ge 2$. Notice that, in such a case, $N(x)\cap N(x')\ne \varnothing$ for any $x,x'\in V(G)$.  We differentiate three  cases for $V_2$.

\eject
%%\vspace{0,2cm}
\noindent
Case 1. There exists $x\in V(G)$ such that  $|V_2 \cap V(H_x)|\ge 2$. In this case, $f(H_{x'})=\n(H)$ for every $x'\in N(x)$, and so   $\gamma_R^p(G\circ H)=\omega(f)\ge f[H_x]\ge 4+\n(H)$, which is a contradiction with Corollary \ref{teo-consequences}-(ii).

\vspace{0,2cm}
\noindent Case 2: There exist two different vertices $(x,y),(x',y')\in V_2$ such that $x\neq x'$. In this case, $f(H_{z})=\n(H)$ for any $z\in N(x)\cap N(x')$, and so   $\gamma_R^p(G\circ H)=\omega(f)\ge f[H_z]\ge 4+\n(H)$, which is again a contradiction with Corollary \ref{teo-consequences}-(ii).

\vspace{0,2cm}

\noindent Case 3: $V_2=\{(x,y)\}$. In this case,  $f(x,v)=1$ for every $v\in V(H)\setminus N[y]$. Hence, $\gamma_R^p(G\circ H)=\omega(f)\ge f(H_x)\ge \n(H)-\deg(y)+1\ge \n(H)-\Delta(H)+1$. By Corollary \ref{teo-consequences}-(ii) we conclude that $\gamma_R^p(G\circ H)=\n(H)-\Delta(H)+1.$

According to the three cases above, (i) follows.

From now on we assume that $\delta(G)=1$ and we consider the following  three   cases for $V_2$.

\vspace{0,2cm}
\noindent Case 1': There exists $x\in V(G)$ such that  $|V_2 \cap V(H_x)|\ge 2$. As in Case 1, we obtain a contradiction.

\vspace{0,2cm}
\noindent Case 2': There exist two different vertices $(x,y),(x',y')\in V_2$ such that $x\ne x'$.  If   $\deg(x)<\Delta(G)-1$ and $\deg(x')<\Delta(G)-1$, then $f(H_{z})=\n(H)$ for every $z\in N(x)\cap N(x')$, and so   $\gamma_R^p(G\circ H)=\omega(f)\ge f[H_z]\ge 4+\n(H)$, which is a contradiction with Corollary \ref{teo-consequences}-(ii).

\medskip
Now, assume that  $\deg(x)=\Delta(G)-1$. If $\deg(x')\ge 2$, then as above $f(H_{z})=\n(H)$ for every $z\in N(x)\cap N(x')$, and we have again a contradiction with Corollary \ref{teo-consequences}-(ii). Finally, if $\deg(x')=1$, then $f(x,b)\ge 1$ for every $b\in N(y)$ and $f(x',b')\ge 1$ for every $b'\in N(y')$. Thus,
 $\gamma_R^p(G\circ H)=\omega(f)\ge f(H_x)+f(H_{x'})\geq 2\delta(H)+4$, and by Corollary  \ref{teo-consequences}-(i) we conclude that  $\gamma_R^p(G\circ H)= 2\delta(H)+4$.

\vspace{0,2cm}
\noindent Case 3': $V_2=\{(x,y)\}$. As in Case 3, we deduce that $\gamma_R^p(G\circ H)=\n(H)-\Delta(H)+1.$

\medskip
According to these last three cases, (ii) follows.
\end{proof}

\begin{lemma}\label{Todos0o1}
Let $f(V_0,V_1,V_2)$ be a $\gamma_R^p(G\circ H)$-function and $x\in V(G)$. If $V(H_x)\cap V_2=\varnothing$, then either $V(H_x)\subseteq V_0$ or $V(H_x)\subseteq V_1$.
\end{lemma}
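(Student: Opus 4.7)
The plan is to argue by contradiction. Assume $V(H_x)\cap V_2=\varnothing$ but that $V(H_x)$ meets both $V_0$ and $V_1$; pick $(x,y_0)\in V_0$ and $(x,y_1)\in V_1$ inside $V(H_x)$. The goal will be to produce a PRDF $f'$ of weight $\omega(f)-1$ by simply relabelling $(x,y_1)$ from $1$ to $0$, contradicting the optimality of $f$.

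The key observation, immediate from the definition of the lexicographic product, is that every vertex of $V(H_x)$ is adjacent in $G\circ H$ to every vertex $(w,z)$ with $w\in N(x)$, \emph{independently} of the $H$-coordinate. Consequently, the value-$2$ neighbours of $(x,y_0)$ are exactly the vertices of $V_2\cap\bigcup_{w\in N(x)}V(H_w)$, since the remaining neighbours of $(x,y_0)$ lie in $V(H_x)$ and $V(H_x)\cap V_2=\varnothing$. Because $f$ is a PRDF and $(x,y_0)\in V_0$, there must be a \emph{unique} such pair $(w_0,z_0)\in V_2$. (The case $N(x)=\varnothing$ is immediate: then $(x,y_0)$ would have no value-$2$ neighbour at all, so no such $y_0$ can exist and $V(H_x)\subseteq V_1$ trivially.)

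Now define $f'$ by $f'(x,y_1)=0$ and $f'=f$ elsewhere, so that $\omega(f')=\omega(f)-1$. Since $V_2$ is unchanged, the PRDF condition still holds on every vertex other than possibly $(x,y_1)$. But the observation of the previous paragraph applies equally to $(x,y_1)$: its value-$2$ neighbourhood coincides with that of $(x,y_0)$ and therefore equals $\{(w_0,z_0)\}$. Hence $f'$ is a PRDF of weight strictly smaller than $\gamma_R^p(G\circ H)$, a contradiction. The only step requiring any real thought is recognizing that the uniqueness of the value-$2$ dominator of $(x,y_0)$ is inherited by \emph{every} vertex of $V(H_x)\setminus V_2$; once that symmetry is noted, the relabelling argument is entirely routine.
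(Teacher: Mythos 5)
Your proposal is correct and follows essentially the same argument as the paper: both exploit that a value-$2$ dominator of a value-$0$ vertex in the layer $H_x$ must lie in $N(x)\times V(H)$ and hence uniquely dominates every vertex of $V(H_x)$, and then relabel value-$1$ vertices of the layer to $0$ to contradict minimality (the paper relabels the whole layer, you relabel a single vertex, which is an immaterial difference). Your explicit handling of the $N(x)=\varnothing$ case is a minor extra care the paper leaves implicit.
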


\begin{proof}
Suppose that $V(H_x)\cap V_2=\varnothing$ and there exist $y_1,y_2\in V(H)$ such that $f(x,y_1)=0$ and $f(x,y_2)=1$. In such a case, there exists exactly one vertex $(u,v)\in V_2$ which is adjacent to $(x,y_1)$. Hence, $u\in N(x)$ and  $(u,v)$ is the only vertex belonging to $V_2$ which is adjacent to $(x,y_2)$. Thus, the function $g(W_0,W_1,W_2)$, defined by $W_2=V_2$, $W_1=V_1\setminus V(H_x)$ and $W_0=V_0\cup V(H_x)$, is a PRDF on $G\circ H$, which is a contradiction, as $\omega(g)<\omega(f)$. Therefore, the result follows.
\end{proof}

\begin{theorem}\label{teo-lowerBoundGeneral}
For any graph $G$ without isolated vertices and any nontrivial graph $H$,
$$\gamma_R^p(G\circ H)\ge \gamma(G)\min\{\n(H)-\Delta(H)+1,2+\delta(H)\}.$$
\end{theorem}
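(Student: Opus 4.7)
The plan is to fix a $\gamma_R^p(G\circ H)$-function $f(V_0,V_1,V_2)$, set $m=\min\{\n(H)-\Delta(H)+1,\,2+\delta(H)\}$, and exhibit a dominating set $D\subseteq V(G)$ for which $f(H_x)\geq m$ whenever $x\in D$; granted this,
$$\gamma_R^p(G\circ H)=\omega(f)=\sum_{x\in V(G)}f(H_x)\geq \sum_{x\in D}f(H_x)\geq |D|\,m\geq \gamma(G)\,m.$$

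By Lemma \ref{Todos0o1} I partition $V(G)$ into $A=\{x:V(H_x)\cap V_2\neq\varnothing\}$, $B=\{x\in V(G)\setminus A:V(H_x)\subseteq V_1\}$, and $C=\{x\in V(G)\setminus A:V(H_x)\subseteq V_0\}$, and split $A=A_1\cup A_2$ according to whether $|V(H_x)\cap V_2|\geq 2$ or equals $1$. Take $D=\{x\in V(G):f(H_x)\geq m\}$. Trivially $B\subseteq D$, since $f(H_x)=\n(H)\geq m$. For $x\in A_1$, pick two vertices of $V(H_x)\cap V_2$; any $(x',y')$ with $x'\in N(x)$ is adjacent in $G\circ H$ to both of them, so by perfectness $(x',y')\notin V_0$. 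Hence $f(H_{x'})\geq \n(H)\geq m$ for every $x'\in N(x)$.

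The main technical step is the estimate for $A_2$. Let $x\in A_2$ with $V(H_x)\cap V_2=\{(x,y)\}$. If $N(x)\cap A=\varnothing$, then for each $y'\in V(H)\setminus N[y]$ the vertex $(x,y')$ has no possible $V_2$-neighbour (none in $H_x$ because $y'\notin N[y]$, and none in another copy because $N(x)\cap A=\varnothing$), forcing $f(x,y')\geq 1$ and hence
$$f(H_x)\geq 2+(\n(H)-1-\deg_H(y))\geq \n(H)-\Delta(H)+1\geq m.$$
If instead $N(x)\cap A\neq\varnothing$, pick $y''\in N(y)$: the vertex $(x,y'')$ is already adjacent to $(x,y)\in V_2$, so by perfectness it may have no further $V_2$-neighbour; but any $V_2$ vertex in a copy $H_{x'}$ with $x'\in N(x)\cap A$ would be such a neighbour. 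Therefore $f(x,y'')\geq 1$ for every $y''\in N(y)$, giving $f(H_x)\geq 2+\deg_H(y)\geq 2+\delta(H)\geq m$ (the degenerate sub-case $\deg_H(y)=0$ forces $\delta(H)=0$ and $m\leq 2$, where $f(H_x)\geq 2$ is trivial). Thus $A_2\subseteq D$.

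It remains to check that $D$ dominates $G$. For $x\in C$, the $V_0$-vertices of $H_x$ must be dominated externally, producing some $x'\in N(x)\cap A$; such $x'$ must belong to $A_2$, for if $x'\in A_1$ the $A_1$-estimate applied to $x\in N(x')$ would force $f(H_x)\geq \n(H)>0$, contradicting $x\in C$. For $x\in A_1\setminus D$, the $A_1$-estimate shows that every neighbour of $x$ in $G$ lies in $D$, and $N(x)\neq\varnothing$ because $G$ has no isolated vertex. Together with $B\cup A_2\subseteq D$, this yields $D\in\mathcal{D}(G)$, hence $|D|\geq \gamma(G)$, and the displayed chain of inequalities closes the argument. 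The main obstacle is the delicate $A_2$-analysis, where one must use the exactness of the PRDF condition in two opposite directions—forcing $1$'s on $V(H)\setminus N[y]$ when $N(x)\cap A$ is empty and $1$'s on $N(y)$ otherwise—so that exactly the two terms of the minimum defining $m$ arise, one in each sub-case.
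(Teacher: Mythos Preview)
Your proof is correct and follows essentially the same strategy as the paper's: exhibit a dominating set $D$ of $G$ on which $f(H_x)\ge m$, the two lower bounds $\n(H)-\Delta(H)+1$ and $2+\delta(H)$ arising precisely according to whether $x$ has a neighbour in $A$. The only cosmetic difference is that the paper partitions $W_2=A$ by isolation in $G[W_2]$ (sets $W_{2,0}^0,W_{2,1}$) rather than by $|V(H_x)\cap V_2|$, and takes $D=W_1\cup W_{2,1}\cup W_{2,0}^0$ explicitly instead of your $D=\{x:f(H_x)\ge m\}$; your extra treatment of $A_1$ is absorbed in the paper's argument, but both routes yield the same inequality chain.
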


\begin{proof}
Let $f(V_0,V_1,V_2)$ be a $\gamma_R^p(G\circ H)$-function, and define $W_0=\{x\in V(G): V(H_x)\subseteq V_0\}$, $W_1=\{x\in V(G): V(H_x)\subseteq V_1\}$ and $W_2=V(G)\setminus (W_0\cup W_1)$.
In fact, by Lemma \ref{Todos0o1}, $W_2=\{x\in V(G):\, V(H_x)\cap V_2\ne \emptyset \}$.
Let $W_{2,0}$ be the set of isolated vertices of $G[W_2]$, $W_{2,1}=W_2\setminus W_{2,0}$ and $W_{2,0}^0=\{x\in W_{2,0}: N(x)\times V(H)\cap V_0\neq \varnothing \}$.

Thus, if $x\in W_0$, then $V(H_x)\subseteq V_0$ and there exists exactly one vertex $(u,v)\in V_2$ such that $u\in N(x)\cap W_2$. Also, if $x\in W_{2,0}\setminus W_{2,0}^0$, then $N(x)\cap W_1\neq \varnothing $.
Hence, $W_1\cup W_{2,1}\cup W_{2,0}^0\in \mathcal{D}(G)$. Notice that if $x\in W_{2,0}^0$, then $f(H_x)\geq \n(H)-\Delta(H)+1$ and if $x\in W_{2,1}$, then $f(H_x)\geq 2+\delta(H)$. Therefore,
 \begin{eqnarray*}
\gamma_R^p(G\circ H)&=&\sum_{x\in V(G)}f(H_x)\\
&\ge &\sum_{x\in W_{2,0}^0}f(H_x)+\sum_{x\in W_{2,1}}f(H_x)+\sum_{x\in W_1}f(H_x)\\
&\geq & |W_{2,0}^0|(\n(H)-\Delta(H)+1)+|W_{2,1}|(2+\delta(H))+|W_1|\n(H)\\
&\geq &(|W_{2,0}^0|+|W_{2,1}|+|W_1|)\min\{\n(H)-\Delta(H)+1,2+\delta(H)\}\\
&\geq &\gamma(G)\min\{\n(H)-\Delta(H)+1,2+\delta(H)\}.
\end{eqnarray*}

\vspace*{-8mm}
\end{proof}

From Corollary \ref{teo-consequences} and Theorem \ref{teo-lowerBoundGeneral} we deduce the following result.

\begin{theorem}\label{Foirst-Consequence-Bounds}
Given a graph $G$  without isolated vertices, the following statements hold.
\begin{enumerate}[{\rm (i)}]
\itemsep=0.9pt
\item If   $G$ is an efficient closed domination graph, then  for any  graph $H$ with $2\le \n(H)\le \Delta(H)+\delta(H)+1$,\vspace{-1mm}
$$\gamma_R^p(G\circ H)= \gamma(G)(\n(H)-\Delta(H)+1).$$
\item If $\gamma^p(G)=\gamma_t(G)=\gamma(G)$  and  $G$ is an efficient open domination graph, then for any nontrivial graph  $H$ with $ \n(H)\ge \Delta(H)+\delta(H)+1$,\vspace{-1mm}
$$\gamma_R^p(G\circ H)= \gamma(G)(2+\delta(H)).$$
\end{enumerate}
\end{theorem}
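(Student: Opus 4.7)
The plan is to sandwich $\gamma_R^p(G\circ H)$ between matching upper and lower bounds that have already been established: the upper bounds come from Corollary \ref{teo-consequences} (according to whether $G$ is an efficient closed or efficient open domination graph), and the lower bound in every case comes from Theorem \ref{teo-lowerBoundGeneral}. The only substantive step is to use the hypothesis relating $\n(H)$ to $\Delta(H)+\delta(H)+1$ to decide which of the two terms in the minimum $\min\{\n(H)-\Delta(H)+1,\,2+\delta(H)\}$ is the smaller one.

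For (i), I would first note that the inequality $\n(H)\le \Delta(H)+\delta(H)+1$ is equivalent to $\n(H)-\Delta(H)+1\le 2+\delta(H)$, so the minimum above collapses to $\n(H)-\Delta(H)+1$. Since $\n(H)\ge 2$ ensures $H$ is nontrivial, Theorem \ref{teo-lowerBoundGeneral} yields
\[
\gamma_R^p(G\circ H)\ge \gamma(G)(\n(H)-\Delta(H)+1).
\]
On the other hand, because $G$ is an efficient closed domination graph, Corollary \ref{teo-consequences}(ii) supplies the matching upper bound, and equality follows.

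For (ii), the situation is symmetric: the hypothesis $\n(H)\ge \Delta(H)+\delta(H)+1$ rearranges to $2+\delta(H)\le \n(H)-\Delta(H)+1$, so Theorem \ref{teo-lowerBoundGeneral} gives
\[
\gamma_R^p(G\circ H)\ge \gamma(G)(2+\delta(H)).
\]
Since $G$ is an efficient open domination graph, Corollary \ref{teo-consequences}(i) yields $\gamma_R^p(G\circ H)\le \gamma_t(G)(2+\delta(H))$, and the hypothesis $\gamma_t(G)=\gamma(G)$ closes the gap. The third equality $\gamma^p(G)=\gamma_t(G)$ is not required to run the inequalities; it appears to be stated for consistency with the form of part~(i), where a single parameter $\gamma(G)$ governs the answer, and it is automatic for efficient open domination graphs satisfying $\gamma_t(G)=\gamma(G)$ via Remark \ref{efficient open domination graph}.

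I expect no real obstacle here, since the heavy lifting has been done in Theorem \ref{teo-lowerBoundGeneral} and in the explicit PRDF constructions behind Corollary \ref{teo-consequences}. The only delicate bookkeeping is the elementary equivalence between the assumption on $\n(H)$ and the relative ordering of the two linear expressions in the minimum; once that is in place, the theorem is immediate.
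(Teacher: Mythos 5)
Your proposal is correct and follows exactly the paper's route: the paper derives this theorem precisely by combining the lower bound of Theorem \ref{teo-lowerBoundGeneral} with the upper bounds of Corollary \ref{teo-consequences}, using the hypothesis $\n(H)\lessgtr \Delta(H)+\delta(H)+1$ to resolve the minimum, just as you do. Your side remark that $\gamma^p(G)=\gamma_t(G)$ in part~(ii) is redundant (it follows from Remark \ref{efficient open domination graph} together with $\gamma_t(G)=\gamma(G)$ and $\gamma(G)\le\gamma^p(G)$) is also accurate.
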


\begin{corollary}\label{CorollaryFoirst-Consequence-Bounds}
Given a graph $G$  without isolated vertices and a nontrivial graph $H$, the following statements hold.
\begin{enumerate}[{\rm (i)}]
\itemsep=0.9pt
\item If  $\mathcal{P}_2(G,H)$ holds, then $\gamma_R^p(G\circ H)= 2\gamma(G).$
\item If $\gamma^p(G)=\gamma(G)$  and   $\mathcal{P}_3(G,H)$ holds, then
$\gamma_R^p(G\circ H)= 2\gamma(G).$
\end{enumerate}
\end{corollary}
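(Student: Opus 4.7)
The plan is to derive both parts as direct specializations of Theorem \ref{Foirst-Consequence-Bounds}. The work is essentially a bookkeeping exercise: unpack the definitions of $\mathcal{P}_2(G,H)$ and $\mathcal{P}_3(G,H)$, check that the numerical side-conditions on $\n(H)$, $\Delta(H)$ and $\delta(H)$ appearing in Theorem \ref{Foirst-Consequence-Bounds} are satisfied, and evaluate the resulting closed formula.

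For part (i), I would observe that $\mathcal{P}_2(G,H)$ supplies precisely the hypothesis on $G$ required by Theorem \ref{Foirst-Consequence-Bounds}(i), namely that $G$ is an efficient closed domination graph. The additional assumption $\gamma(H)=1$ means $H$ has a universal vertex, so $\Delta(H)=\n(H)-1$, whence $\n(H)-\Delta(H)+1=2$. The bound $\n(H)\le \Delta(H)+\delta(H)+1=\n(H)+\delta(H)$ is immediate from $\delta(H)\ge 0$, and $\n(H)\ge 2$ since $H$ is nontrivial. Theorem \ref{Foirst-Consequence-Bounds}(i) therefore applies and yields
$$\gamma_R^p(G\circ H)=\gamma(G)(\n(H)-\Delta(H)+1)=2\gamma(G).$$

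For part (ii), combining $\gamma^p(G)=\gamma(G)$ with $\mathcal{P}_3(G,H)$ (which carries $\gamma^p(G)=\gamma_t(G)$) gives $\gamma^p(G)=\gamma_t(G)=\gamma(G)$, while $\mathcal{P}_3(G,H)$ also asserts that $G$ is an efficient open domination graph. These are exactly the structural conditions on $G$ in Theorem \ref{Foirst-Consequence-Bounds}(ii). Since $\delta(H)=0$, the condition $\n(H)\ge \Delta(H)+\delta(H)+1$ reduces to $\n(H)\ge \Delta(H)+1$, which is automatic in any simple graph. Applying Theorem \ref{Foirst-Consequence-Bounds}(ii) gives
$$\gamma_R^p(G\circ H)=\gamma(G)(2+\delta(H))=2\gamma(G).$$

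There is no genuine obstacle here, as this corollary is essentially a repackaging; the only point warranting care is confirming that the inequalities on $\n(H)$ required in Theorem \ref{Foirst-Consequence-Bounds} are non-vacuous under the hypotheses at hand, which amounts to the trivial observations $\delta(H)\ge 0$, $\n(H)\ge 2$ and $\Delta(H)\le \n(H)-1$ noted above.
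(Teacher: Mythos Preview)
Your proposal is correct and follows exactly the intended route: the paper states this corollary without proof immediately after Theorem~\ref{Foirst-Consequence-Bounds}, and your derivation simply makes explicit the specialization the paper leaves implicit. The verifications that $\gamma(H)=1$ forces $\n(H)-\Delta(H)+1=2$ and that $\delta(H)=0$ forces $2+\delta(H)=2$, together with the trivial checks on the side-conditions, are precisely what is needed.
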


\begin{theorem}\label{TrivialLowerbound}

Given two nontrivial graphs $G$ and $H$, the following statements hold.
\begin{enumerate}[{\rm (i)}]
\itemsep=0.9pt
\item $\gamma_R^p(G\circ H)\ge \max\{\gamma_R^p(G), 2\gamma(G)\}.$
\item $\gamma_R^p(G\circ H)=\gamma_R^p(G)$ if and only if $\gamma_R^p(G)=2\gamma^p(G)$ and either $\mathcal{P}_2(G,H)$ holds or $\mathcal{P}_3(G,H)$ holds.
\item If $H$ has order at least three, then $\gamma_R^p(G\circ H)=2\gamma(G)$ if and only if $\gamma^p(G)=\gamma(G)$ and either $\mathcal{P}_2(G,H)$ holds or $\mathcal{P}_3(G,H)$ holds.
\end{enumerate}
\end{theorem}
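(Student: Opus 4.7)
The proof will rest on a projection procedure that extracts a PRDF on the base graph $G$ from an optimal PRDF on $G\circ H$. Given a $\gamma_R^p(G\circ H)$-function $f(V_0,V_1,V_2)$, Lemma \ref{Todos0o1} partitions $V(G)$ into $W_0=\{x:V(H_x)\subseteq V_0\}$, $W_1=\{x:V(H_x)\subseteq V_1\}$ and $W_2=\{x:V(H_x)\cap V_2\ne\varnothing\}$, and I would define $g$ on $G$ by $g(x)=i$ iff $x\in W_i$. The main observation is that $g$ is a PRDF on $G$: for $x\in W_0$, each $(x,v)\in V_0$ has a unique $V_2$-neighbour $(u,w)$ necessarily satisfying $u\in N(x)\cap W_2$, and this $u$ cannot depend on $v$, for otherwise taking any other $v'\ne v$ would produce two distinct $V_2$-neighbours of $(x,v')$ through both ``$u$-slices''.

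Part (i) follows directly: $f(H_x)\ge 2=g(x)$ for $x\in W_2$, $f(H_x)=\n(H)\ge 2>1=g(x)$ for $x\in W_1$, and $f(H_x)=0=g(x)$ for $x\in W_0$, so $\omega(f)\ge\omega(g)\ge\gamma_R^p(G)$; moreover $W_1\cup W_2\in\mathcal{D}(G)$, giving $\omega(f)\ge 2(|W_1|+|W_2|)\ge 2\gamma(G)$.

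For the forward direction of (ii), the equality $\omega(f)=\gamma_R^p(G)$ combined with $\n(H)\ge 2$ forces $W_1=\varnothing$ and $f(H_x)=2$ with a unique weight-two vertex $(x,y_x)$ for each $x\in W_2$; hence $W_2$ is a perfect dominating set with $2|W_2|=\gamma_R^p(G)$, yielding $\gamma_R^p(G)=2\gamma^p(G)$. The key structural step is to inspect the perfect Roman condition at every $(x,v)\in V_0$ with $x\in W_2$ and $v\ne y_x$: if $x$ is isolated in $G[W_2]$ then $y_x$ must be universal in $H$, so $\gamma(H)=1$ (Case A); if $x$ has exactly one neighbour $x'\in W_2$ then $y_x$ must be isolated in $H$, so $\delta(H)=0$ (Case B); and $x$ having two or more $W_2$-neighbours is immediately ruled out. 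Since $H$ cannot contain both a universal and an isolated vertex, exactly one of these two cases applies across the whole of $W_2$. Case A further implies, via the perfect Roman condition at any $(z,v)$ with $z\in N(x)\cap N(x')\cap W_0$, that $W_2$ is a packing; combined with $W_2\in\mathcal{D}(G)$, this makes $W_2$ a $\gamma(G)$-set that is also a packing, so $\mathcal{P}_2(G,H)$ holds. Case B gives $G[W_2]\cong\cup K_2$ and, via Remark \ref{efficient open domination graph}, $|W_2|=\gamma_t(G)=\gamma^p(G)$, so $\mathcal{P}_3(G,H)$ holds. The backward direction combines Corollary \ref{teo-consequences} with part (i): under $\mathcal{P}_2$ one has $\gamma_R^p(G\circ H)\le 2\gamma(G)=2\gamma^p(G)=\gamma_R^p(G)$, and under $\mathcal{P}_3$ one has $\gamma_R^p(G\circ H)\le 2\gamma_t(G)=2\gamma^p(G)=\gamma_R^p(G)$.

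Part (iii) follows the same template. The backward direction is immediate from Corollary \ref{CorollaryFoirst-Consequence-Bounds}. For the forward direction, the estimate
$$\omega(f)\ge 2|W_2|+\n(H)|W_1|=2(|W_1|+|W_2|)+(\n(H)-2)|W_1|\ge 2\gamma(G)+(\n(H)-2)|W_1|$$
uses $\n(H)\ge 3$ to force $|W_1|=0$ and $|W_2|=\gamma(G)$, so $W_2$ is simultaneously a $\gamma(G)$-set and a $\gamma^p(G)$-set, giving $\gamma^p(G)=\gamma(G)$; the dichotomy argument from (ii) then yields $\mathcal{P}_2$ or $\mathcal{P}_3$. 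The main obstacle lies in the forward implications: one must extract rigid global structural conditions on $G$ and $H$ from only local perfect Roman information around each vertex $(x,y_x)\in V_2$, and the tightness of the trichotomy together with the incompatibility of universal and isolated vertices in $H$ is what delivers the clean dichotomy between $\mathcal{P}_2$ and $\mathcal{P}_3$.
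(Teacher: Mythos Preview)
Your proof is correct and follows essentially the same approach as the paper: define the projected function $g(W_0,W_1,W_2)$ on $G$, verify it is a PRDF to obtain (i), and for the forward directions of (ii) and (iii) force $W_1=\varnothing$ and $f(H_x)=2$ on $W_2$, then case-split on the degree of $x$ in $G[W_2]$ to obtain the $\mathcal{P}_2$/$\mathcal{P}_3$ dichotomy. The only noteworthy difference is that for the bound $\gamma_R^p(G\circ H)\ge 2\gamma(G)$ the paper invokes Theorem~\ref{teo-lowerBoundGeneral}, whereas you argue it directly from $W_1\cup W_2\in\mathcal{D}(G)$ together with $f(H_x)\ge 2$ on $W_1\cup W_2$; both arguments are equally short.
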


\begin{proof}
By Theorem \ref{teo-lowerBoundGeneral} we deduce that $\gamma^p_R(G\circ H)\ge 2\gamma(G).$
From now on, let $f(V_0,V_1,V_2)$ be a $\gamma_R^p(G\circ H)$-function, and define the function  $g(W_0,W_1,W_2)$  on $G$ by $W_0=\{x\in V(G):\, V(H_x)\subseteq V_0\}$,   $W_1=\{x\in V(G):\, V(H_x)\subseteq V_1\}$ and $W_2=V(G)\setminus (W_0\cup W_1)$. If $x\in W_0$, then $V(H_x)\subseteq V_0$ and there exists exactly one vertex $(u,v)\in V_2$ such that $u\in N(x)\cap W_2$. Hence, $g$ is a PRDF on $G$, and so $\gamma_R^p(G)\le \omega(g)\le \omega (f)=\gamma_R^p(G\circ H).$ Therefore, (i) follows.

In order to prove (ii), assume   $\gamma_R^p(G\circ H)=\gamma_R^p(G)$. Notice that in this case the function $g(W_0,W_1,W_2)$ defined above is a $\gamma_R^p(G)$-function. We first  show that the $\gamma_R^p(G\circ H)$-function $f(V_0,V_1,V_2)$ satisfies $V_1=\varnothing$. Suppose to the contrary, that there exists $(u,v)\in V_1$. If $V(H_u)\cap V_2= \varnothing$, then by Lemma \ref{Todos0o1} we have that $V(H_u)\subseteq V_1$ and since $|V(H_u)|\ge 2$, we deduce that $\gamma_R^p(G)\le \omega(g)<\omega (f)=\gamma_R^p(G\circ H),$ which is a contradiction. The same contradiction is reached if $V(H_u)\cap V_2\ne\varnothing$, as in such a case $f(H_u)\ge 3$. Hence, $V_1=\varnothing$, which implies that $W_1=\varnothing$ and $W_2\in \mathcal{D}^p(G)$.

\medskip
Furthermore,  $2 \gamma^p(G)\le 2|W_2|\le \gamma_R^p(G\circ H)=\gamma_R^p(G)\le 2 \gamma^p(G)$, and so we conclude that $W_2$ is a  $\gamma^p(G)$-set and $\gamma_R^p(G)=2 \gamma^p(G)$.
We differentiate two cases for $x\in W_2$.

\vspace{0,2cm}
\noindent
Case 1. There exists $x'\in N(x)\cap W_2$.
In this case, there exist $y,y'\in V(H)$ such that  $(x,y),(x',y')\in V_2$, and so no vertex in  $V(H_{x})\setminus \{(x,y)\}$ is adjacent to $(x,y)$. Hence $y$ is an isolated vertex of $H$. Notice that   $N(x)\cap W_2=\{x'\}$, otherwise every vertex in $V(H_{x})\cap V_0=V(H_x)\setminus \{(x,y)\}$ is adjacent to two vertices in $V_2$, which is a contradiction.

\vspace{0,2cm}
\noindent
Case 2. $N(x)\cap W_2=\varnothing$. In this case, there exists $y\in V(H)$ such that $(x,y)\in V_2$ and every vertex in  $V(H_{x})\cap V_0=V(H_x)\setminus \{(x,y)\}$ has to be adjacent to $(x,y)$. Hence, $y$ is a universal vertex of $H$  and so $\gamma(H)=1$. Notice also that $N(x)\cap N(x')=\varnothing$  for every $x'\in W_2\setminus \{x\}.$

\medskip
According to the two cases above, either $H$ has at least one isolated vertex or $\gamma(H)=1$. Thus, either Case 1 holds for every vertex  $x\in W_2$ or Case 2 holds for every vertex  $x\in W_2$. In the first case, it is readily seen that $\mathcal{P}_3(G,H)$ holds, while if  Case 2 holds for every vertex  $x\in W_2$, then  $W_2$ is a packing, and so $\gamma^p(G)=|W_2|\le \rho (G)\le \gamma(G)\le \gamma^p(G)$, which implies that $\mathcal{P}_2(G,H)$ holds.

Conversely, assume that $\gamma_R^p(G)=2\gamma^p(G)$. If $\mathcal{P}_3(G,H)$ holds, then Corollary \ref{teo-consequences}-(i) and the lower bound (i) lead to $\gamma_R^p(G\circ H)=\gamma_R^p(G)$. Finally, if $\mathcal{P}_2(G,H)$ holds, then  Theorem \ref{Foirst-Consequence-Bounds}-(i) leads to
 $\gamma_R^p(G\circ H)=\gamma^p_R(G)$, which completes the proof of (ii).

\medskip
 We proceed to prove (iii). Assume  $\gamma_R^p(G\circ H)=2\gamma (G)$. Since  $|V(H)|\ge 3$ and $W_1\cup W_2\in \mathcal{D}(G)$, we deduce that if $W_1\ne \varnothing$, then  $2\gamma(G)< |V(H)||W_1|+2|W_2|\le \omega (f)=\gamma_R^p(G\circ H),$ which is a contradiction. Hence,  $W_1=\varnothing$ and $W_2\in \mathcal{D}^p(G)$.
Furthermore,  $2 \gamma(G)\le 2|W_2|=\gamma_R^p(G\circ H)= 2 \gamma(G)$, which  implies  that $W_2$ is a  $\gamma(G)$-set and also a $\gamma^p(G)$-set.
We differentiate two cases for $x\in W_2$.

\vspace{0,2cm}
\noindent
Case 1'. There exists $x'\in N(x)\cap W_2$.
As in Case 1, we can see that  $H$ has an isolated vertex and   $N(x)\cap W_2=\{x'\}$.

\vspace{0,2cm}
\noindent
Case 2'. $N(x)\cap W_2=\varnothing$. By analogy to Case 2  we deduce that $\gamma(H)=1$.

\medskip
Thus, either Case 1' holds for every vertex  $x\in W_2$ or Case 2' holds for every vertex  $x\in W_2$.
In the first case, we deduce that $\mathcal{P}_3(G,H)$ follows, while if  Case 2' holds for every vertex  $x\in W_2$, then  $W_2$ is a packing, and so $\gamma^p(G)=|W_2|\le \rho (G)\le \gamma(G)\le \gamma^p(G)$, which leads to $\mathcal{P}_2(G,H)$.

Conversely, assume $\gamma^p(G)=\gamma(G)$. If $\mathcal{P}_3(G,H)$ holds, then Corollary \ref{teo-consequences}-(i) and the lower bound (i) lead to  $\gamma_R^p(G\circ H)=2\gamma(G)$. Finally, if $\mathcal{P}_2(G,H)$ holds, then Theorem \ref{Foirst-Consequence-Bounds}-(i) leads to
 $\gamma_R^p(G\circ H)=2\gamma(G)$, which completes the proof.
\end{proof}

\begin{theorem}\label{Segmentos-Aislados}
 Let $G$ and $H$ be two graphs.
If $G$ is an efficient open domination graph and $\n(H)\ge \Delta(H)+2\delta(H)+3$, then
$$\gamma_R^p(G\circ H)=\gamma_t(G)(2+\delta(H)).$$
\end{theorem}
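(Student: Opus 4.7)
The upper bound $\gamma_R^p(G\circ H)\le \gamma_t(G)(2+\delta(H))$ is given by Corollary~\ref{teo-consequences}-(i), so the task reduces to the matching lower bound. My plan is to take any $\gamma_R^p(G\circ H)$-function $f(V_0,V_1,V_2)$ and accumulate its weight along the blocks of an efficient open domination decomposition of $G$. Partition $V(G)$ into $W_0,W_1,W_2$ according to whether $V(H_x)\subseteq V_0$, $V(H_x)\subseteq V_1$, or $V(H_x)\cap V_2\ne \varnothing$ (a valid trichotomy by Lemma~\ref{Todos0o1}). Refine $W_2$ by letting $W_{2,0}$ be the isolated vertices of $G[W_2]$, $W_{2,1}=W_2\setminus W_{2,0}$, and split $W_{2,0}=W_{2,0}^0\cup W_{2,0}^1$ according to whether the vertex has a neighbor in $W_0$, observing that $x\in W_{2,0}^1$ forces $N(x)\subseteq W_1$.

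Writing $Y_x=\{y:(x,y)\in V_2\}$ and $Z_x=\{z:(x,z)\in V_0\}$, the per-vertex weight bounds I would establish are: $f(H_x)\ge 2+\delta(H)$ for $x\in W_{2,1}$ (when $Z_x\ne\varnothing$, unique-domination of any $(x,z)\in V_0$ from outside $V(H_x)$ forces $Z_x\subseteq V(H)\setminus N_H[Y_x]$, hence $|Z_x|\le \n(H)-1-\delta(H)$; otherwise $V(H_x)\subseteq V_1\cup V_2$ directly gives $f(H_x)\ge \n(H)+1$); $f(H_x)\ge \n(H)-\Delta(H)+1$ for $x\in W_{2,0}^0$ (by the analogous perfect-domination argument carried out inside $H_x$, which also forces $|Y_x|=1$); and $f(H_x)=\n(H)$ for $x\in W_1$. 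Moreover, for any $u\in W_0$ a uniqueness-of-dominator argument on $V_2\cap(N(u)\times V(H))$ shows that $u$ has exactly one $W_2$-neighbor $a_0^*$ with $|Y_{a_0^*}|=1$, and this $a_0^*$ lies in $W_{2,0}^0\cup W_{2,1}$ (the class $W_{2,0}^1$ is excluded, since $u\in W_0\cap N(a_0^*)$ contradicts its definition).

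Now let $D$ be an efficient open dominating set of $G$, so $|D|=\gamma_t(G)$, $V(G)=\bigsqcup_{u\in D}N(u)$, and $G[D]\cong \cup K_2$, yielding a perfect matching of $|D|/2$ pairs $\{u,u^*\}$. Setting $B_u=\sum_{x\in N(u)}f(H_x)$, we have $\omega(f)=\sum_{\{u,u^*\}}(B_u+B_{u^*})$, so it suffices to prove $B_u+B_{u^*}\ge 2(2+\delta(H))$ for each matched pair; summing over the $|D|/2$ pairs then yields $\omega(f)\ge \gamma_t(G)(2+\delta(H))$. A case analysis on the $W$-classes of $u$ and $u^*$ handles most combinations using the per-vertex bounds above together with the hypothesis $\n(H)\ge \Delta(H)+2\delta(H)+3$ (which delivers $\n(H)\ge 2+\delta(H)$ and $\n(H)-\Delta(H)+1\ge 2\delta(H)+4$). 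For instance, if $u\in W_{2,0}^1$ then $u^*\in W_1$ is forced, so $B_u+B_{u^*}\ge f(H_{u^*})+f(H_u)\ge \n(H)+2$ already closes the pair.

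The main obstacle, and the place where the hypothesis is used exactly at the threshold, is the case $u\in W_0$ and $u^*\in W_{2,0}^0$ with $N(u^*)\subseteq W_0$. Here every vertex of $N(u^*)$ lies in $W_0$, so $B_{u^*}=0$, and the entire pair-bound must come from $B_u\ge f(H_{u^*})$. I would show that $Y_{u^*}=\{y^*\}$ is a singleton and that for any $z\notin N_H[y^*]$ the vertex $(u^*,z)$ has no candidate dominator in $V_2$ (no $W_2$-neighbor of $u^*$ exists and $y^*\not\sim_H z$), forcing $(u^*,z)\in V_1$. This gives $f(H_{u^*})\ge \n(H)-\Delta(H)+1$, and the hypothesis converts this to $\ge 4+2\delta(H)$ exactly on the threshold, closing the argument.
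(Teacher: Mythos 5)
Your proposal is correct and is essentially the paper's own proof: the upper bound is Corollary \ref{teo-consequences}-(i), and the lower bound is obtained by summing $f$ over the blocks of an efficient open domination decomposition of $G$ and establishing the per-matched-pair bound $2(2+\delta(H))$, with the hypothesis $\n(H)\ge \Delta(H)+2\delta(H)+3$ invoked exactly in the case where some copy $H_x$ must be dominated by a single vertex inside it, which costs at least $\n(H)-\Delta(H)+1\ge 2\delta(H)+4$. The paper indexes its blocks as $N[x]\setminus\{x'\}$ for matched $x,x'\in S$ and runs the within-pair case analysis on whether $V_2$ meets $V(H_x)$ and $V(H_{x'})$, whereas you use the blocks $N(u)$, $u\in D$, and the $W$-classification from the general lower bound; the pair-blocks coincide, and the remaining class combinations you leave implicit do close with the facts you state.
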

		
\begin{proof}
Let $S\in \mathcal{D}^p(G)$  such that $G[S] \cong \cup K_2$ and assume that $\n(H)\ge \Delta(H)+2\delta(H)+3$.

\medskip
Let $x,x'\in S$ be two adjacent vertices, and define  $X_x=\{x\}\cup \epn(x,S)=N[x]\setminus \{x'\}$ and $X_{x'}=\{x'\}\cup \epn(x',S)=N[x']\setminus \{x\}$. Let $f(V_0,V_1,V_2)$ be a $\gamma_R^p(G\circ H)$-function and define $\varepsilon(x,x')=f(X_x\times V(H))+f(X_{x'}\times V(H))$. In order to prove  that $\varepsilon(x,x')\ge 2(2+\delta(H))$, we differentiate the following cases.

\vspace{0,2cm}
\noindent Case 1: $V_2\cap V(H_x)=V_2\cap V(H_{x'})=\varnothing $. By Lemma \ref{Todos0o1} we have that  $V(H_x)\subseteq V_0$ or $V(H_x)\subseteq V_1$, and also
$V(H_{x'})\subseteq V_0$ or $V(H_{x'})\subseteq V_1.$
The case $V(H_x)\subseteq V_1$ and $V(H_{x'})\subseteq V_1$ leads to $\varepsilon(x,x')\ge f(H_x)+f(H_{x'})=2|V(H)|\ge 2(2+\delta(H))$.

\medskip
If $V(H_x)\subseteq V_0$ and $V(H_{x'})\subseteq V_1$, then  $|V_2\cap \left(\epn(x,S)\times V(H)\right)|=1$, which implies that
$\varepsilon(x,x')\ge  f(X_x\times V(H))+|V(H_{x'})|\ge 2+|V(H)|\ge 5+ 2\delta(H)  > 2(2+\delta(H))$.

\medskip
Finally, if  $V(H_x)\subseteq V_0$ and $V(H_{x'})\subseteq V_0,$ then $|V_2\cap \left(\epn(x,S)\times V(H)\right)|=1$ and also\linebreak
 $|V_2\cap \left(\epn(x',S)\times V(H)\right)|=1$. Since  $\n(H)\ge \Delta(H)+2\delta(H)+3$,
the  vertex of weight two in $\epn(x,S)\times V(H)$ is not able to dominate every vertex in $\epn(x,S)\times V(H)$, which implies that  $f(\epn(x,S)\times V(H))\ge \delta(H)+2$ or $f(\epn(x,S)\times V(H))\ge \n(H)-\Delta(H)+1\ge 2(2+\delta(H))$. By applying the same reasoning to $ \epn(x',S)\times V(H)$ we conclude that $\varepsilon(x,x')\ge  f(\epn(x,S)\times V(H))+f(\epn(x',S)\times V(H))\ge  2(2+\delta(H))$.

\medskip %\vspace{0,2cm}
\noindent Case 2: $V_2\cap V(H_x)\ne\varnothing$ and $ V_2\cap V(H_{x'})=\varnothing $. By Lemma \ref{Todos0o1}, either  $V(H_{x'})\subseteq V_0$ or $V(H_{x'})\subseteq V_1$.
If $V(H_{x'})\subseteq V_1$, then
$\varepsilon(x,x')\ge  2+|V(H_{x'})|\ge 5+ 2\delta(H)  > 2(2+\delta(H))$. Now, assume $V(H_{x'})\subseteq V_0$. In this case,$|V_2\cap V(H_x)|=1$ and, since $\n(H)\ge \Delta(H)+2\delta(H)+3$, we have that $(V(H_x)\setminus N[x])\cap V_0= \emptyset $ or $|V_2\cap (\epn(x,S)\times V(H))|=1$. In both cases we deduce that
$\varepsilon(x,x')\ge    2(2+\delta(H))$.

\medskip %\vspace{0,2cm}
\noindent Case 3: $V_2\cap V(H_x)\ne\varnothing$ and $ V_2\cap V(H_{x'})\ne \varnothing $. In this case,
$|V_2\cap V(H_x)|=1$ and $|V_2\cap V(H_x)|=1$, which implies that $\varepsilon(x,x')\ge  f(H_x)+f(H_{x'})\ge   2(2+\delta(H))$.

\medskip
According to the three cases above we conclude that $\varepsilon(x,x')\ge 2(2+\delta(H))$ for every pair of adjacent vertices $x,x'\in S$. Hence,
$$\gamma_R^p(G\circ H)=\omega(f)\ge \sum_{x\in S}f(X_x\times V(H))\ge |S|(2+\delta(H))= \gamma_t(G) (2+\delta(H)).$$
Therefore, Corollary \ref{teo-consequences}-(i) leads to $\gamma_R^p(G\circ H)= \gamma_t(G)(2+\delta(H)).$
\end{proof}

From the following inequalities we can derive results on the perfect Roman domination number of $G\circ H$.
$$\gamma_R(G\circ H)\le \gamma_R^p(G\circ H)\le 2\gamma^p(G\circ H).$$
Next we discuss the cases in which  the bounds are sharp.

\begin{theorem}\label{LexicPerfectRomanGraphs}
Given a connected nontrivial graph $G$ and $H$ a nontrivial graph, the following statements hold.
\begin{enumerate}[{\rm (i)}]
\itemsep=0.9pt
\item If $\Delta(H)= \n(H)-1$, then $\gamma_R^p(G\circ H)=2\gamma^p(G\circ H)$ if and only if  $\mathcal{P}_2(G,H)$  holds.
\item If $\Delta(H)= \n(H)-2$, then the following statements hold.

\begin{itemize}
\item[{\rm (a)}] If $\gamma_R^p(G\circ H)=2\gamma^p(G\circ H)$, then $\mathcal{P}_1(G,H)$ holds and $2\gamma_t(G)\le 2|S_1|+3|S_0|$ for every $S\in \wp_o(G)\cap \mathcal{D}(G)$.
\item[{\rm (b)}] If $\gamma_R^p(G)=2\gamma_t(G)$ or $\gamma_t(G)=\gamma(G)$, then $\gamma_R^p(G\circ H)=2\gamma^p(G\circ H)$ if and only if $\mathcal{P}_1(G,H)$ holds.
\end{itemize}

\item If $\Delta(H)\le \n(H)-3$, then $\gamma_R^p(G\circ H)=2\gamma^p(G\circ H)$ if and only if $\mathcal{P}_1(G,H)$ holds.
\end{enumerate}
\end{theorem}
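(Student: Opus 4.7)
The unifying strategy is to combine the explicit formula for $\gamma^p(G\circ H)$ from Theorem~\ref{teo-char-gammaPerfectLexic} with the previously established bounds on $\gamma_R^p(G\circ H)$. A first observation common to all three items is that the equality $\gamma_R^p(G\circ H)=2\gamma^p(G\circ H)$ forces either $\mathcal{P}_1(G,H)$ or $\mathcal{P}_2(G,H)$ to hold. Indeed, if neither property holds, then $\gamma^p(G\circ H)=\n(G)\n(H)$ by Theorem~\ref{teo-char-gammaPerfectLexic}, while the all-ones labelling is always a PRDF, yielding $\gamma_R^p(G\circ H)\le\n(G)\n(H)<2\n(G)\n(H)$, a contradiction.

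For item (i), since $\Delta(H)=\n(H)-1$ forces the existence of a universal vertex in $H$, one has $\delta(H)\ge 1$, which rules out $\mathcal{P}_1(G,H)$. Hence, by the preceding remark, the equality can hold only if $\mathcal{P}_2(G,H)$ does; conversely, under $\mathcal{P}_2(G,H)$ the hypotheses of Theorem~\ref{Foirst-Consequence-Bounds}(i) are met with $\n(H)-\Delta(H)+1=2$, yielding $\gamma_R^p(G\circ H)=2\gamma(G)=2\gamma^p(G\circ H)$.

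For item (ii), since $\Delta(H)=\n(H)-2$ implies $\gamma(H)\ge 2$, the property $\mathcal{P}_2(G,H)$ is excluded, so the equality forces $\mathcal{P}_1(G,H)$ and therefore $\gamma^p(G\circ H)=\gamma_t(G)$. To derive the additional inequality in (a), I would apply Theorem~\ref{teo-UpperBoundGeneral} to an arbitrary $S=S_0\cup S_1\in\wp_o(G)\cap\mathcal{D}(G)$: using $\delta(H)=0$, $N[S]=V(G)$, $\n(H)-\Delta(H)+1=3$ and $2+\delta(H)=2$, the bound reduces to $3|S_0|+2|S_1|$, which must therefore dominate $\gamma_R^p(G\circ H)=2\gamma_t(G)$. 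For (b) the substantive direction is the converse: given $\mathcal{P}_1(G,H)$, the hypothesis $\gamma_R^p(G)=2\gamma_t(G)$ lets one sandwich $\gamma_R^p(G\circ H)$ between $\gamma_R^p(G)=2\gamma_t(G)$ (Theorem~\ref{TrivialLowerbound}(i)) and $2\gamma^p(G\circ H)=2\gamma_t(G)$, while if instead $\gamma_t(G)=\gamma(G)$, the same squeeze works with $2\gamma(G)$ on the lower side (again by Theorem~\ref{TrivialLowerbound}(i)) and $2\gamma_t(G)=2\gamma(G)$ on the upper side.

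For item (iii), once more $\gamma(H)\ge 2$ excludes $\mathcal{P}_2(G,H)$, so the equality forces $\mathcal{P}_1(G,H)$. The converse is where Theorem~\ref{Segmentos-Aislados} applies crisply: with $\delta(H)=0$, its hypothesis $\n(H)\ge\Delta(H)+2\delta(H)+3$ collapses to exactly $\Delta(H)\le\n(H)-3$, and the theorem then gives $\gamma_R^p(G\circ H)=\gamma_t(G)(2+0)=2\gamma_t(G)=2\gamma^p(G\circ H)$. The main obstacle throughout is item (ii)(a): the absence, for $\Delta(H)=\n(H)-2$, of a closed-form analogue of Theorem~\ref{Segmentos-Aislados} forces one to settle for a necessary condition only, and a check of non-vacuity is needed — namely that any perfect dominating set $S$ witnessing $\mathcal{P}_1(G,H)$ lies in $\wp_o(G)\cap\mathcal{D}(G)$ with $S=S_1$ and $|S|=\gamma_t(G)$, so the displayed inequality is tight.
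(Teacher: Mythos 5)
Your proposal is correct and follows essentially the same route as the paper: the trichotomy of Theorem \ref{teo-char-gammaPerfectLexic} to force $\mathcal{P}_1(G,H)$ or $\mathcal{P}_2(G,H)$, the squeeze via Theorem \ref{TrivialLowerbound}(i) and $\gamma_R^p(G\circ H)\le 2\gamma^p(G\circ H)=2\gamma_t(G)$ for (ii)(b), and Theorem \ref{Segmentos-Aislados} for (iii). Your only deviations are cosmetic: for (ii)(a) you invoke Theorem \ref{teo-UpperBoundGeneral} as a black box where the paper rebuilds the same PRDF explicitly, and for the converse of (i) you cite Theorem \ref{Foirst-Consequence-Bounds}(i) rather than Theorem \ref{TrivialLowerbound}(i); both are valid and yield the same conclusions.
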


\begin{proof}
Assume $\gamma_R^p(G\circ H)=2\gamma^p(G\circ H)$.
Since $G$ is a graph without isolated vertices and $H$ a nontrivial graph, $\gamma_R^p(G\circ H)<\n(G)\n(H)$, so that from  Theorem \ref{teo-char-gammaPerfectLexic} we have that
either  $\mathcal{P}_1(G,H)$ holds or  $\mathcal{P}_2(G,H)$ holds.  Notice that, by definition, $\mathcal{P}_2(G,H)$ is associated with $\Delta(H)= \n(H)-1$.

\medskip
Now, if   $\mathcal{P}_2(G,H)$ holds, then Theorem \ref{teo-char-gammaPerfectLexic} leads to
 $\gamma_R^p(G\circ H)\le 2\gamma^p(G\circ H)=2\gamma(G)$. In such a case, from Theorem \ref{TrivialLowerbound}-(i)
we conclude that $\gamma_R^p(G\circ H)= 2\gamma^p(G\circ H).$ Therefore, (i) follows.

\medskip
From now on, assume that   $\mathcal{P}_1(G,H)$ holds.  Notice that, in this case,  Theorem \ref{teo-char-gammaPerfectLexic} leads to
\begin{equation}\label{Eq-poof-caso chungo}
\gamma_R^p(G\circ H)\le 2\gamma^p(G\circ H)=2\gamma_t(G).
\end{equation}

First, consider the case $\Delta(H)= \n(H)-2$. Let $v,v'\in V(H)$  such that  $\deg(v)=\n(H)-2$ and  $\deg(v')=0$. Now, if there exists $S\in \wp_o(G)\cap \mathcal{D}(G)$ such that  $2\gamma_t(G)> 2|S_1|+3|S_0|$ then the function
$g(X_0,X_1,X_2)$, defined by $X_2=S_1\times \{v'\}\cup S_0\times \{v\}$ and $X_1=S_0\times \{v'\}$, is a PRDF on $G\circ H$, and so $\gamma_R^p(G\circ H)\leq \omega(g)= 2|S_1|+3|S_0|<2\gamma_t(G)=2\gamma^p(G\circ H)$.
Therefore, (ii)-(a) follows.

\medskip
Furthermore, if
$\gamma_R^p(G)=2\gamma_t(G)$, then by Theorem \ref{TrivialLowerbound}-(i), $2\gamma_t(G)=\gamma_R^p(G)\le \gamma_R^p(G\circ H)$ and so Eq.\ \eqref{Eq-poof-caso chungo}  implies that  $\gamma_R^p(G\circ H)=2\gamma^p(G\circ H).$ The case $\gamma_t(G)=\gamma(G)$ is analogous to the previous one.  Therefore, (ii)-(b) follows.

\medskip
Finally, if $\Delta(H)\le \n(H)-3$, then by Theorem \ref{Segmentos-Aislados} we have that $\gamma_R^p(G\circ H)= 2\gamma_t(G)$. Hence, Eq.\ \eqref{Eq-poof-caso chungo}  implies that $\gamma_R^p(G\circ H)= 2\gamma^p(G\circ H)$, which completes the proof of (iii).
 \end{proof}

In order to state the next result, we define the following parameter.
$$\zeta'(G)= \displaystyle\min_{S\in \wp_o(G)\cap \mathcal{D}(G)}\{ 4|S_0| + 2|S_1|\}.$$
A set $S\in \wp_o(G)\cap \mathcal{D}(G)$ of cardinality $|S|=\zeta'(G)$ will be called a $\zeta'(G)$-set.

\medskip
The following  straightforward lemma will be used in the proof of our next result.

\begin{lemma}\label{teo-trivial-char}
A graph  $G$ is a perfect Roman graph if and only if there exists a $\gamma_R^p(G)$-function $f(V_0,V_1,V_2)$ such that $V_1=\varnothing $.
\end{lemma}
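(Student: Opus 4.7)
My plan is to verify both implications directly from the definitions, using only the standing inequality $\gamma_R^p(G)\le 2\gamma^p(G)$ noted in the introduction.

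For the forward direction, I would start with a $\gamma^p(G)$-set $S$ and build the function $f(V_0,V_1,V_2)$ with $V_2=S$, $V_1=\varnothing$, and $V_0=V(G)\setminus S$. Since $S$ is a perfect dominating set, every vertex of $V_0$ has exactly one neighbour in $V_2$, so $f$ is a PRDF of weight $2|S|=2\gamma^p(G)$. Assuming $G$ is perfect Roman, this weight equals $\gamma_R^p(G)$, so $f$ is a $\gamma_R^p(G)$-function with $V_1=\varnothing$.

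For the converse, suppose $f(V_0,V_1,V_2)$ is a $\gamma_R^p(G)$-function with $V_1=\varnothing$. Then $V_2$ perfectly dominates $V(G)\setminus V_2=V_0$, so $V_2\in\mathcal{D}^p(G)$ and hence $\gamma^p(G)\le|V_2|$. Combining this with the standing bound gives
$$2\gamma^p(G)\le 2|V_2|=\omega(f)=\gamma_R^p(G)\le 2\gamma^p(G),$$
forcing equality throughout, so $G$ is perfect Roman.

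There is no real obstacle here: both directions are one-step unpackings of the relevant definitions together with the inequality $\gamma_R^p(G)\le 2\gamma^p(G)$, which is why the lemma is stated as being straightforward.
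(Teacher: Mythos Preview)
Your proof is correct and is exactly the natural argument; the paper itself omits the proof entirely, calling the lemma ``straightforward,'' so there is nothing further to compare.
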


\begin{theorem}
The following statements hold for a connected nontrivial graph $G$ and any   graph $H$  of order at least three.

\begin{enumerate}[{\rm (i)}]
\itemsep=0.95pt
\item If $\Delta(H)=\n(H)-1$, then $\gamma_R^p(G\circ H)=\gamma_R(G\circ H)$ if and only if  $\mathcal{P}_2(G,H)$  holds.
\item If $\Delta(H)=\n(H)-2$, then $\gamma_R^p(G\circ H)=\gamma_R(G\circ H)$ if and only if  there exists a $\zeta(G)$-couple $(A,B)$ such that $A\cup B\in \wp_o(G)$ and $A=\varnothing$ whenever  $\delta(H)\ge 1$.

\item If $\Delta(H)=\n(H)-3$, then $\gamma_R^p(G\circ H)=\gamma_R(G\circ H)$ if and only if either $\delta(H)=0$ and  $\zeta'(G)=2\gamma_t(G)$ or $\delta(H)\ge 1$ and $\gamma_t(G)=2\gamma^p(G)=2\rho(G)$.

\item If $\Delta(H)\leq \n(H)-4$, then $\gamma_R^p(G\circ H)=\gamma_R(G\circ H)$ if and only if  $\mathcal{P}_1(G,H)$  holds.
\end{enumerate}
\end{theorem}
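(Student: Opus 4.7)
The plan is to split the analysis into the four cases according to the value of $\Delta(H)$, using Theorem~\ref{teo-principal-generalizado} to compute $\gamma_R(G\circ H)$ and comparing it against the upper and lower bounds on $\gamma_R^p(G\circ H)$ developed in this section. The recurring tool will be the partition $V(G)=W_0\cup W_1\cup W_2$ associated to any $\gamma_R^p(G\circ H)$-function $f(V_0,V_1,V_2)$ via Lemma~\ref{Todos0o1}, together with the further splitting $W_2=W_{2,0}\cup W_{2,1}$ according to whether the vertex is isolated in $G[W_2]$. Two easy observations will be used repeatedly: every PRDF is an RDF, so $\gamma_R^p(G\circ H)\ge \gamma_R(G\circ H)$; and $\zeta'(G)\ge 2\gamma_t(G)$, which follows by appending a $G$-neighbor of each isolated vertex of $G[S]$ to any $S\in\wp_o(G)\cap\mathcal{D}(G)$ to produce a total dominating set of size at most $2|S_0|+|S_1|$.

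Case (i) will follow directly from Theorem~\ref{TrivialLowerbound}-(iii): the hypothesis $\Delta(H)=\n(H)-1$ is equivalent to $\gamma(H)=1$, which forces $\delta(H)\ge 1$ once $\n(H)\ge 2$, so $\mathcal{P}_3(G,H)$ cannot hold, and the characterization collapses to $\mathcal{P}_2(G,H)$ (which automatically yields $\gamma^p(G)=\gamma(G)$ because a packing that dominates is perfect). For case (iv), where $\gamma_R(G\circ H)=2\gamma_t(G)$, sufficiency will be immediate from Corollary~\ref{teo-consequences}-(i). For necessity, I will take a $\gamma_R^p(G\circ H)$-function $f$ of weight $2\gamma_t(G)$ and combine the lower bound from the proof of Theorem~\ref{teo-lowerBoundGeneral} with the estimate $\gamma_t(G)\le 2|W_1|+|W_{2,1}|+2|W_{2,0}|$ (obtained by appending to $W_1\cup W_2$ one extra neighbor for each vertex of $W_{2,0}\cup W_1$ that lacks a close neighbor in $W_1\cup W_2$). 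Subtracting will produce
\[
(\n(H)-4)|W_1|+\delta(H)|W_{2,1}|+(\n(H)-\Delta(H)-3)|W_{2,0}|\le 0,
\]
and the hypothesis $\Delta(H)\le \n(H)-4$ will force $W_1=W_{2,0}=\varnothing$ and $\delta(H)=0$. A short inspection of how the vertices $(x,z)\in V_0$ with $x\in W_2$ must be perfectly dominated will then force $G[W_2]\cong \cup K_2$, so $G$ is an efficient open domination graph and $\mathcal{P}_1(G,H)$ holds.

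Case (iii) is the boundary variant of (iv). For sufficiency when $\delta(H)=0$ and $\zeta'(G)=2\gamma_t(G)$, plugging a $\zeta'(G)$-set into Theorem~\ref{teo-UpperBoundGeneral} yields a PRDF of weight $\zeta'(G)=2\gamma_t(G)=\gamma_R(G\circ H)$; and for $\delta(H)\ge 1$ with $\gamma_t(G)=2\gamma^p(G)=2\rho(G)$, a perfect dominating set that is also a packing will give, via the same theorem, a PRDF of weight $4\rho(G)=2\gamma_t(G)$. For necessity, the subtraction above reduces to $(\n(H)-4)|W_1|+\delta(H)|W_{2,1}|\le 0$. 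When $\delta(H)=0$ this still forces $W_1=\varnothing$, so $W_2\in \wp_o(G)\cap\mathcal{D}(G)$ gives $\zeta'(G)\le 4|W_{2,0}|+2|W_{2,1}|\le 2\gamma_t(G)$, whence equality using the general inequality $\zeta'(G)\ge 2\gamma_t(G)$ recalled above. When $\delta(H)\ge 1$ the inequality forces $W_{2,1}=\varnothing$, so $W_2=W_{2,0}$ is a packing that is simultaneously a perfect dominating set, giving $\gamma^p(G)=\rho(G)=\gamma(G)$; and the weight balance $4|W_2|=2\gamma_t(G)$ then produces $\gamma_t(G)=2\rho(G)=2\gamma^p(G)$.

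Case (ii) will be the most delicate. For sufficiency, given a $\zeta(G)$-couple $(A,B)$ with $A\cup B\in \wp_o(G)$, I will adapt the construction $V_2=(A\cup B)\times\{v\}$, $V_1=B\times\{v'\}$ from the proof of Theorem~\ref{teo-principal-generalizado}: the open-packing hypothesis guarantees uniqueness of the $V_2$-neighbor for every $(x,y)\in V_0$ with $x\notin A\cup B$, while the assumption $A=\varnothing$ when $\delta(H)\ge 1$ is precisely what prevents the vertex $(x,v')$ for $x\in A$ from being doubly dominated, since then $v'$ has at least one $H$-neighbor which would create a second $V_2$-neighbor; when $\delta(H)=0$ one selects $v'$ to be an isolated vertex of $H$ and the conflict disappears. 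For necessity, given $\omega(f)=\zeta(G)$, I will set $B=\{x\in W_2:N(x)\cap W_2\ne\varnothing\}$ and $A=W_2\setminus B$, and verify, by an exchange-of-weights argument analogous to Lemma~\ref{lem-principal} but tailored to PRDFs, that $(A,B)$ is a $\zeta(G)$-couple and $A\cup B\in \wp_o(G)$. The main obstacle will be this final verification: showing that an optimal PRDF can be assumed to concentrate $V_2$ on a single $H$-column per vertex of $W_2$, and that these columns form an open packing of $G$, will require a more intricate exchange argument than in Lemma~\ref{lem-principal}.
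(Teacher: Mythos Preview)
Your plan for (i) via Theorem~\ref{TrivialLowerbound}-(iii) is fine, but the remaining cases have genuine problems.

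\textbf{Case (ii), sufficiency.} Your construction $V_2=(A\cup B)\times\{v\}$, $V_1=B\times\{v'\}$ is \emph{not} a PRDF. Take $x\in A$. Since $x\notin B$, the dominating-couple condition forces a neighbour $x'\in N(x)\cap(A\cup B)$. Then for any $y\in N_H(v)$ (and $\deg_H(v)=\n(H)-2\ge 1$), the vertex $(x,y)\in V_0$ is adjacent to both $(x,v)$ and $(x',v)$ in $V_2$. This double domination occurs regardless of whether $\delta(H)=0$; the issue you isolate at $(x,v')$ is not the real obstruction. The paper sidesteps this by putting the weight~$2$ for $A$-vertices on the \emph{isolated} vertex $v'$ instead, taking $X_2=A\times\{v'\}\cup B\times\{v\}$; this works because every $\zeta(G)$-couple automatically has $N(b)\cap(A\cup B)=\varnothing$ for $b\in B$ (else move $b$ to $A$ and save one unit).

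\textbf{Case (ii), necessity.} Your assignment $B=\{x\in W_2:N(x)\cap W_2\ne\varnothing\}$, $A=W_2\setminus B$ is backwards. With that choice, a vertex $x\in A$ is isolated in $G[W_2]$, so $N(x)\cap(A\cup B)=\varnothing$, and $(A,B)$ fails to be a dominating couple. The correct split is $A=W_{2,1}$, $B=W_{2,0}$: isolated $W_2$-vertices contribute at least $\n(H)-\Delta(H)+1=3$ to $\omega(f)$ and play the role of $B$; matched $W_2$-vertices contribute $2+\delta(H)$ and play the role of $A$.

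\textbf{Cases (iii)--(iv), necessity.} Your subtraction inequality
\[
(\n(H)-4)|W_1|+\delta(H)|W_{2,1}|+(\n(H)-\Delta(H)-3)|W_{2,0}|\le 0
\]
does not force $W_1=\varnothing$ when $\n(H)\le 4$: for $H\cong N_3$ in (iii) the $W_1$-coefficient is $-1$, and for $H\cong N_4$ in (iv) it is $0$. Nor does it, by itself, bound the degree in $G[W_2]$ by one. The paper handles both points uniformly at the outset by exploiting that $f$ is simultaneously a $\gamma_R(G\circ H)$-function: a fibre $H_u\subseteq V_1$ can be zeroed out and replaced by a single weight~$2$ on a neighbouring fibre (ruling out $W_1$), and a vertex of degree $\ge 2$ in $G[W_2]$ forces $V(H_x)\cap V_0=\varnothing$ (PRDF) and $V(H_x)\cap V_1=\varnothing$ ($\gamma_R$-optimality), a contradiction. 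Once you know $W_1=\varnothing$ and $W_2\in\wp_o(G)\cap\mathcal D^p(G)$, your counting in (iii)--(iv) goes through cleanly; but you need those structural facts first, and your inequality alone does not supply them for the extremal small graphs $H$.
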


\begin{proof}
First,  assume  $\gamma_R^p(G\circ H)=\gamma_R(G\circ H)$. Let $f(V_0,V_1,V_2)$ be a $\gamma_R^p(G\circ H)$-function, and define $W_0=\{x\in V(G): V(H_x)\subseteq V_0\}$, $W_1=\{x\in V(G): V(H_x)\subseteq V_1\}$ and $W_2=V(G)\setminus (W_0\cup W_1)$. Notice that $f$ is also a $\gamma_R(G\circ H)$-function.
If there exists $u\in W_1$, then for any $u'\in N(u)$ and $v\in V(H)$, the function $g$, defined by $g(H_u)=0$, $g(u',v)=2$, and $g(x,y)=f(x,y)$ for the remaining vertices, is an RDF on $G\circ H$ of weight $\omega(g)<\omega(f)=\gamma_R^p(G\circ H)=\gamma_R(G\circ H)$, which is a contradiction. Hence,  $W_1=\varnothing $.
Now, suppose that $G[W_2]$ has a vertex $x$ of degree at least two. Since $f$  is a  $\gamma_R^p(G\circ H)$-function, $V(H_x)\cap V_0=\varnothing$ and, since $f$  is a  $\gamma_R(G\circ H)$-function, $V(H_x)\cap V_1=\varnothing$, which is a contradiction. Therefore, $W_2\in \wp_o(G)\cap \mathcal{D}^p(G)$. We differentiate two cases.
From each case, we will get partial conclusions and, once both cases have been analysed, we will be able to complete the proof of each statement separately.

\vspace{.2cm}
\noindent
Case 1.  $V_1=\varnothing$. Lemma \ref{teo-trivial-char} leads to $\gamma_R^p(G\circ H)=2\gamma^p(G\circ H)$. Hence, if $\Delta(H)=\n(H)-1$, then by Theorem \ref{LexicPerfectRomanGraphs}-(i) we deduce that $\mathcal{P}_2(G,H)$  holds.  Analogously, if $\Delta(H)\leq \n(H)-2$, then by Theorem \ref{LexicPerfectRomanGraphs}-(ii) we deduce that $\mathcal{P}_1(G,H)$ holds. Notice that in this latter case  $\delta(H)=0$ and also Theorem \ref{teo-char-gammaPerfectLexic} leads to $\gamma_R(G\circ H)=\gamma_R^p(G\circ H)=2\gamma^p(G\circ H)=2\gamma_t(G)$.

\vspace{.2cm}
\noindent
Case 2. $V_1\neq \varnothing$. Let $u\in V(G)$   such that $V(H_u)\cap V_1\neq \varnothing $. Since $W_1=\varnothing$, by Lemma \ref{Todos0o1} we have that $u\in W_2$.
Since $f$ is also a $\gamma_R(G\circ H)$-function,  $N(u)\cap  W_2=\varnothing$ and so $N(u)\subseteq W_0$, which implies that  $W_2'=\{x\in W_2:\, V(H_x)\cap V_1\ne  \varnothing \}$ is a packing. Notice also that $|V(H_u)\cap V_2|=1$.
With these facts in mind, we differentiate the following subcases.

\vspace{.2cm}
\noindent
Subcase 2.1. $\Delta(H)=\n(H)-1$.  Let $V(H_u)\cap V_2=\{(u,v)\}$ and $(u,v')\in  V(H_u)\cap V_1$. Notice, that $v'\not \in N(v)$, as $f$ is a $\gamma_R(G\circ H)$-function. Now, let  $v''$ be a universal vertex of $H$ and define a function $g'$ as $g'(u,v'')=2$, $g'(u,v)=g'(u,v')=0$  and $g'(x,y)=f(x,y)$ for the remaining vertices. Obviously, $g'$ is an RDF  on $G\circ H$ with $\omega(g')<\omega(f)=\gamma_R(G\circ H)$, which is  a contradiction. Therefore, $\Delta(H)=\n(H)-1$ leads to $V_1= \varnothing$.

\vspace{.2cm}
\noindent
Subcase 2.2. $\Delta(H)= \n(H)-2$. By Theorem \ref{teo-principal-generalizado}, $\gamma_R(G\circ H)=\zeta(G)$, and since $W_2\in \wp_o(G)\cap \mathcal{D}^p(G)$, we have that $(W_2\setminus W_2',W_2')$ is a dominating couple, which implies that  $\gamma_R(G\circ H)=\zeta(G)\le 2|W_2\setminus W_2'|+3|W_2'|=2|W_2|+|W_2'| \le 2|V_2|+|V_1|=\gamma_R(G\circ H)$, which implies that $(W_2\setminus W_2',W_2')$ is a $\zeta(G)$-couple.

\medskip
Now, assume  $\delta(H)\ge 1$.
Suppose that there exists $x\in W_2\setminus W_2'$,  and let $(x,y)\in V_2$. In such a case, $(V(H_x)\setminus \{(x,y)\})\subseteq V_0$, which implies that $N(x)\cap W_2=\varnothing$, but this is a contradiction as $\deg(y)\le \n(H)-2$. Thus, $W_2=W_2'$.

\vspace{.2cm}
\noindent
Subcase 2.3. $\Delta(H)= \n(H)-3$. Assume first that $\delta(H)=0$. By Theorem \ref{teo-UpperBoundGeneral}, for any $\zeta'(G)$-set $S=S_0\cup S_1$ we have $\gamma_R^p(G\circ H)\le 2|S_1|+4|S_0|= \zeta'(G)$.
Now, from Theorem \ref{teo-principal-generalizado} we have that $\gamma_R(G\circ H)=2\gamma_t(G)$, and since $W_2\in \wp_o(G)\cap \mathcal{D}^p(G)$ and $f(H_x)=4$ for every vertex $x\in W_2'$, we have that
$2\gamma_t(G)=\gamma_R(G\circ H)=\gamma_R^p(G\circ H)\le \zeta'(G)\le 2|W_2\setminus W_2'|+4|W_2'|\le \gamma_R(G\circ H)=2\gamma_t(G)$.
Therefore, $\zeta'(G)=2\gamma_t(G)$.

\medskip
On the other side, if $\delta(H)\ge 1$, then we can proceed as  in Subcase 2.2 to deduce that $W_2=W_2'$ is a packing and also a perfect dominating set of $G$, which implies that $\gamma^p(G)=\gamma(G)=\rho(G)$.
Thus,  $f(H_x)=4$ for every vertex $x\in W_2$, and so  by Theorem \ref{teo-principal-generalizado} we have that $2\gamma_t(G)=\gamma_R(G\circ H)=\gamma_R^p(G\circ H)=4|W_2|=4\gamma^p(G)=4\rho(G)$. Therefore, $\gamma_t(G)=2\gamma^p(G)=2\rho(G)$.

\vspace{.2cm}
\noindent
Subcase 2.4. $\Delta(H)\leq \n(H)-4$.
In this case, $|V(H_u)\cap V_1|\geq 3$. Hence,  for any $u'\in N(u)$ and $v\in V(H)$, the function $g'$, defined by $g'(H_u)=g'(H_{u'})=g'(u,v)=g'(u',v)=2$ and $g'(x,y)=f(x,y)$ for the remaining vertices, is an RDF on $G\circ H$ of weight $\omega(g')<\omega(f)=\gamma_R(G\circ H)$, which is  again a contradiction. Therefore, $\Delta(H)\le \n(H)-4$ leads to $V_1= \varnothing$.

\medskip
We proceed to summarize the conclusions derived from the cases above, and to prove the statements.

\vspace{.3cm}
\noindent \textit{Proof of (i)}. Assume $\Delta(H)=\n(H)-1$.  As we have shown in Case 1 and Subcase 2.1, from $\gamma_R^p(G\circ H)=\gamma_R(G\circ H)$ we deduce that  $\mathcal{P}_2(G,H)$ holds.

\medskip
Conversely,  if $\Delta(H)=\n(H)-1$ and $\mathcal{P}_2(G,H)$ holds, then by Theorems \ref{LexicPerfectRomanGraphs}-(i), \ref{teo-char-gammaPerfectLexic} and \ref{teo-principal-generalizado},  $\gamma_R^p(G\circ H)=2\gamma^p(G\circ H)=2\gamma(G)=\gamma_R(G\circ H)$. Therefore, (i) follows.

\vspace{.3cm}
\noindent \textit{Proof of (ii)}. Assume $\Delta(H)=\n(H)-2$.  If $\gamma_R^p(G\circ H)=\gamma_R(G\circ H)$, then we have to consider Case 1 and Subcase 2.2.

\medskip
From Case 1, $\gamma_R(G\circ H)=2\gamma_t(G)$ and $\mathcal{P}_1(G,H)$ holds. Thus, $\delta(H)=0$ and there exists an efficient open dominating set $S$ of $G$ with $|S|=\gamma_t(G)$. Since $(S,\varnothing)$ is a dominating couple and $2|S|=2\gamma_t(G)=\gamma_R(G\circ H)$, by  Theorem~\ref{teo-principal-generalizado} we conclude that
$(S,\varnothing)$ is a $\zeta(G)$-couple and, obviously, $S\in \wp_o(G)$.

On the other hand, in Subcase 2.2 we concluded that $(W_2\setminus W_2',W_2')$ is a $\zeta(G)$-couple and $W_2\in \wp_o(G)$. Also, $W_2=W_2'$ whenever   $\delta(H)\ge 1$.

\medskip
Conversely, let $(A,B)$ be a $\zeta(G)$-couple such that $A\cup B\in \wp_o(G)$.    Let $v\in V(H)$ be a vertex of maximum degree and let $\{v'\}=V(H)\setminus N[v]$.

\medskip
Notice that if $v'$ is an isolated vertex, then the function $g(X_0,X_1,X_2)$, defined by $X_2=A\times \{v'\}\cup B\times \{v\}$ and $X_1=B\times \{v'\}$, is a PRDF on $G\circ H$. Hence, by Theorem~\ref{teo-principal-generalizado},   $\zeta(G)= \gamma_R(G\circ H)\leq\gamma_R^p(G\circ H)\leq \omega(g)= 2|A|+3|B|=\zeta(G)$. Therefore, $\gamma_R^p(G\circ H)=\gamma_R(G\circ H)$.

\medskip
Now, if $\deg(v')\ge 1$ and $A=\varnothing$, then $B$ is a packing and also a dominating set, which implies that the function $g(X_0,X_1,X_2)$, defined by $X_2= B\times \{v\}$ and $X_1=B\times \{v'\}$, is a PRDF on $G\circ H$. Hence, by Theorem~\ref{teo-principal-generalizado},   $\zeta(G)= \gamma_R(G\circ H)\leq\gamma_R^p(G\circ H)\leq \omega(g)= 3|B|=\zeta(G)$. Therefore, $\gamma_R^p(G\circ H)=\gamma_R(G\circ H)$, as required.

\vspace{.3cm}
\noindent \textit{Proof of (iii)}. Assume $\Delta(H)=\n(H)-3$.
If $\gamma_R^p(G\circ H)=\gamma_R(G\circ H)$, then we have to consider Case 1 and Subcase 2.3.

\medskip
In Case 1 we deduced that $\delta(H)=0$,   $\gamma_R^p(G\circ H)=2|W_2|=2\gamma_t(G)$. Furthermore, since $\mathcal{P}_1(G,H)$ holds, $W_2\in \wp_o(G)\cap \mathcal{D}^p(G)$. Thus, $\zeta'(G)\le 2|W_2|$ and, by Theorem~\ref{teo-UpperBoundGeneral},   $2|W_2|=\gamma_R^p(G\circ H)\le \zeta'(G)\le 2|W_2|$, which implies that $\zeta'(G)=2\gamma_t(G)$.

\medskip
In Subcase 2.3, we deduced that if  $\delta(H)=0$, then
$\zeta'(G)=2\gamma_t(G)$, while if $\delta(H)\ge 1$, then
$\gamma_t(G)=2\gamma^p(G)=2\rho(G)$.

\medskip
Conversely, assume that $\zeta'(G)=2\gamma_t(G)$ and $\delta(H)=0$. Let $S=S_0\cup S_1$ be a
$\zeta'(G)$-set. By Theorems \ref{teo-UpperBoundGeneral} and \ref{teo-principal-generalizado}
  we have that $2\gamma_t(G)=\gamma_R(G\circ H)\le \gamma_R^p(G\circ H)\le 2|S_1|+4|S_0|= \zeta'(G)=2\gamma_t(G)$, which implies that $\gamma_R(G\circ H)=\gamma_R^p(G\circ H)$.

\medskip
Now, assume $\delta(H)\ge 1$ and $\gamma_t(G)=2\gamma^p(G)=2\rho(G)$.  Let $v\in V(H)$ be a vertex of maximum degree, $\{v_1,v_2\}=V(H)\setminus N[v]$ and $D$ a $\gamma^p(G)$-set. Notice that the function $g(X_0,X_1,X_2)$, defined by $X_2=D\times \{v\}$ and $X_1=D\times \{v_1,v_2\}$, is a PRDF on $G\circ H$. Hence, by Theorem \ref{teo-principal-generalizado} we have that  $2\gamma_t(G)=\gamma_R(G\circ H)\leq\gamma_R^p(G\circ H)\leq \omega(g)= 2|X_2|+|X_1|=4\gamma^p(G)=2\gamma_t(G)$. Thus, $\gamma_R^p(G\circ H)=\gamma_R(G\circ H)$, which completes the proof of (iii).

\vspace{.3cm}
\noindent \textit{Proof of (iv)}. Assume $\Delta(H)\le \n(H)-4$. As shown in Case 1 and Subcase 2.4, from $\gamma_R^p(G\circ H)=\gamma_R(G\circ H)$ we deduce that  $\mathcal{P}_1(G,H)$ holds.
Conversely,  if $\mathcal{P}_1(G,H)$ holds, then by Theorems \ref{LexicPerfectRomanGraphs}-(iii), \ref{teo-char-gammaPerfectLexic} and \ref{teo-principal-generalizado} it follows that $\gamma_R^p(G\circ H)=2\gamma^p(G\circ H)=2\gamma_t(G)=\gamma_R(G\circ H)$. Therefore, (iv) follows.
\end{proof}

\section{Concluding remarks}

This paper is part of a larger project in which the aim is to propose closed formulae for the domination parameters  of product graphs. In general, these formulae are expressed in terms of various parameters of the graphs involved in the product.
The specific aim of this paper is to  study the case of the perfect domination number,  the Roman domination number and the perfect Roman domination number of lexicographic product graphs.
We show that this goal can be achieved relatively easily for the case of the first two parameters, while for the case of the perfect Roman domination number the picture is completely different.
The impossibility of achieving the target in the case of the latter parameter led us to
obtain  general bounds, and then to give some sufficient and/or necessary conditions for the bounds to be achieved. As a consequence of the results obtained, there are several challenges for future work. Some of them are the traditional ones in domination theory and consist of improving the obtained bounds or trying to characterise the families of graphs that reach them.  In our opinion, a more important challenge is to try to look from another angle to try to achieve the initial objective by trying to find relationships with graph parameters that we have not considered, or with parameters that have not yet been defined and studied.

\providecommand{\noopsort}[1]{}

\end{document}